\documentclass[%
 amsmath,amssymb,
 aps, physrev,
]{revtex4-2}

\usepackage{amsfonts}
\usepackage{dsfont}
\usepackage{amsmath}
\usepackage{amssymb}
\usepackage{bm}
\usepackage{subcaption}
\usepackage{comment}

\usepackage{enumitem}

\setenumerate{label=\roman*), ref=(\roman*)}

\usepackage{xcolor}
\definecolor{myblue}{HTML}{006AA3}
\definecolor{mygreen}{HTML}{5B9A64}
\definecolor{myred}{HTML}{C1121F}
\definecolor{myorange}{HTML}{FAA00F}
\definecolor{mypurple}{HTML}{6D4C94}

\usepackage{tikz}
\usepackage{pgfplots}
\pgfplotsset{compat=1.18}
\usepgflibrary{arrows.meta} 

\usetikzlibrary{arrows.meta,positioning,backgrounds,shapes,calc,fit,hobby,decorations.markings,3d,calc}
\pgfdeclarelayer{background}
\pgfsetlayers{background,main}
\usepgfplotslibrary{fillbetween}
\tikzset{
    node style/.style={circle, draw, fill=white, inner sep=1pt, minimum size=1.5em},
    main line/.style={thick}
}

\usepackage{amsthm}

\newtheorem{theorem}{Theorem}
\newtheorem{definition}{Definition}
\newtheorem{proposition}{Proposition}

\newtheorem{standassumption}{Standing Assumption}
\newtheorem{remark}{Remark}

\usepackage[hidelinks]{hyperref}
\usepackage[ruled,vlined]{algorithm2e}

\newcommand{\Ecal}{\mathcal{E}}

\newcommand{\Gcal}{\mathcal{G}}

\newcommand{\Ncal}{\mathcal{N}}

\newcommand{\Vcal}{\mathcal{V}}

\newcommand{\norm}[1]{\left\| #1 \right\|}

\newcommand{\R}{\mathbb{R}}

\newcommand{\one}{\mathbf{1}}

\newcommand{\zero}{\mathbf{0}}
\usepackage{booktabs}
\usepackage{multirow}

\begin{document}


\title{Social learning community detection with nonlinear interaction}


\author{Anthony Couthures}
\affiliation{UCLouvain, ICTEAM Institute, Louvain-la-Neuve, 1348, Belgium.}

\author{Athira Varma Jayakumar}
\affiliation{DISTek Integration, Inc., 6612 Chancellor Dr., Ste. 600, Cedar Falls, Iowa 50613, USA}

\author{Vineeth Satheeskumar Varma}
\affiliation{Universit\'e de Lorraine, CNRS, CRAN, F-54000 Nancy, France.}

\author{Irinel-Constantin Mor\u{a}rescu}
\affiliation{Universit\'e de Lorraine, CNRS, CRAN, F-54000 Nancy, France.}

\author{Samson Lasaulce}
\affiliation{Universit\'e de Lorraine, CNRS, CRAN, F-54000 Nancy, France.}

\author{Antoine Girard}
\affiliation{Universit\'e Paris-Saclay, CNRS, CentraleSup\'elec, Laboratoire
    des signaux et syst\`emes, F-91190, Gif-sur-Yvette, France.}


\thanks{This work has been submitted to the IEEE for possible publication. Copyright may be transferred without notice, after which this version may no longer be accessible.}


\date{\today}

\begin{abstract}
    Conventional community detection requires centralized network data, making it unsuitable for distributed or privacy-preserving systems. In this paper, we demonstrate that macroscopic graph partitioning can emerge purely from strictly local, privacy preserving interactions driven by social learning. By reframing clustering as a symmetry-breaking process within nonlinear opinion dynamics, we show that exchanging saturated state dependent signal (like public actions) forces a network to naturally fracture along its sparsest cuts. We mathematically establish the spectral conditions under which dense core communities lock into stable, polarized states, robustly resisting external influence. To apply this mechanism, we propose three decentralized algorithms, leading up to the Score-based Edge Reliability (SER) framework. By evaluating network ties across multiple independent discussion topics, SER statistically bypasses the errors of traditional greedy bisections and naturally isolates structurally ambiguous frontier nodes. Validations on the ABCD benchmark and the real-world Ngogo chimpanzee network confirm that our fully decentralized approach matches the accuracy of globally optimized heuristics (e.g., Louvain, Leiden) up to a theoretical limit of detectable graphs.
\end{abstract}

\maketitle

\section{Introduction}
\label{sec:introduction}

The organization of complex networks is frequently characterized by an intermediate-scale structure consisting of communities: groups of nodes that are densely connected internally while being sparsely connected to the rest of the network. Identifying these structures is fundamental to understanding the functional properties, resilience, and information flow within a system \cite{fortunatoCommunityDetectionGraphs2010,liComprehensiveReviewCommunity2024}. While exact modularity optimization is computationally intractable, heuristic methods provide excellent approximations for community detection relying on three main paradigms: spectral clustering (e.g., \cite{pothenPartitioningSparseMatrices1990,spielmanSpectralPartitioningWorks2007}), modularity maximization (e.g., \cite{blondelFastUnfoldingCommunities2008,seifikarCBlondelEfficientLouvainBased2020}), and probabilistic methods (e.g., \cite{dempsterMaximumLikelihoodIncomplete1977,mosselStochasticBlockModels2012}). However, the vast majority of standard clustering algorithms rely on a centralized computing architecture. Centralized approaches inherently require a global view of the network topology, meaning the complete adjacency matrix must be stored and processed by a single entity. In many modern distributed systems---such as autonomous robot swarms, decentralized social platforms, and privacy-preserving sensor networks---acquiring this global view is practically impossible because the network may be too large, or because communication costs and legal privacy constraints prohibit such data aggregation. This creates a need for decentralized algorithms where the global community structure emerges purely from local peer-to-peer interactions.

We can draw direct inspiration for such algorithms from the way communities organically form in human social systems. In real-world societies, individuals do not possess a global map of the social network to determine their group affiliations, nor do they have direct access to the private beliefs of their peers. Instead, community identification emerges as an individual learning process driven by observing the public actions or decisions of neighbors. Through interaction, individuals exchange opinions that can be interpreted as signals across various topics. If two people consistently influence one another and align their behavior---by sharing similar actions or communication patterns---across multiple independent subjects (even if they initially held opposing views), they exhibit strong structural affinity and effectively belong to the same social group. Conversely, individuals gradually discount or sever ties with peers whom they consistently fail to persuade or with whom their observable decisions persistently clash. Evaluating these local interactions across a spectrum of topics reveals the underlying community boundaries. Forming local consensus while pruning discordant ties drives this social learning process.

A natural mathematical framework for translating this decentralized sociological behavior into computation on graphs is the study of multi-agent dynamical systems, particularly consensus protocols and social learning models (as introduced in \cite{frenchFormalTheorySocial1956}), which connect naturally to opinion dynamics models. In a standard linear DeGroot model \cite{degrootReachingConsensus1974}, agents repeatedly average their states with those of their neighbors. While effective for distributed agreement, linear consensus inevitably converges to a homogeneous global state for any connected graph \cite{bergerNecessarySufficientCondition1981}, which eliminates the structural diversity necessary to partition the network. To sustain disagreements required for a discrimination process and enable clustering, the dynamics must therefore be more complex.

Historically, researchers have introduced various behavioral and topological mechanisms to break the symmetry of global consensus. Bounded confidence models \cite{deffuantMixingBeliefsInteracting2000,hegselmannOpinionDynamicsBounded2002}, for instance, achieve clustering by assuming that agents sever communication ties with peers holding disparate views, resulting in a state-dependent topology that fragments into cohesive subgroups. Using this property, Mor\u{a}rescu and Girard explicitly applied opinion dynamics to graph community detection by introducing a decaying confidence bound. In their model, communities emerge as asymptotic consensus clusters among agents whose states converge at a prescribed rate \cite{morarescuOpinionDynamicsDecaying2011}. Alternatively, models incorporating antagonistic interactions on signed graphs \cite{altafiniConsensusProblemsNetworks2013} demonstrate that negative edges can drive a network toward a polarized, bipartite consensus, provided the underlying topology satisfies structural balance conditions. These frameworks capture opinion fragmentation but rely on dynamic topological changes or the assumption of negative edges. From the perspective of human social learning, however, these assumptions are often overly restrictive: social constraints frequently prevent the outright removal of edges, and individuals do not need to be antagonistic to maintain a disagreement.

However, transmitting exact, continuous internal states is unrealistic; internal opinions are private variables that are often not exactly quantified even by the agents themselves. Instead, agents observe the public actions or saturated signals of their peers. Distinguishing continuous internal beliefs from observable decisions shows that restricting the resolution of exchanged information generates persistent disagreement over a static, cooperative topology. This mechanism can be modeled through discontinuous quantized communication \cite{martinsContinuousOpinionsDiscrete2008,chowdhuryContinuousOpinionsDiscrete2016} or, more recently, through continuous sigmoidal feedback \cite{grayMultiagentDecisionMakingDynamics2018,baumannModelingEchoChambers2020,bizyaevaNonlinearOpinionDynamics2023,bizyaevaMultitopicBeliefFormation2025a}. This framework has been used to study polarization in opinion dynamics \cite{leonardNonlinearFeedbackDynamics2021a,baumannEmergencePolarizedIdeological2021}.

In the latter framework, introducing a high-gain nonlinearity into the interaction maps the continuous internal state to a saturated decision, forcing the system away from global consensus and into polarized states. The resulting dynamics act similarly to an Ising model \cite{martinsContinuousOpinionsDiscrete2008}: the nonlinearity creates a bistable potential that breaks the symmetry of the origin. Recent theoretical work also establishes how to tune the interaction nonlinearity to achieve a spectral threshold for clustering \cite{fontanMultiequilibriaAnalysisClass2018,bizyaevaNonlinearOpinionDynamics2023,couthuresGlobalSynchronizationMultiAgent2025}. This makes the social-learning approach a practical tool for community detection, as agents can now sustain disagreements necessary for clustering.

Decentralized community detection has been explored previously, most notably through Label Propagation Algorithms and their overlapping variants (e.g., COPRA, SLPA) \cite{raghavanLinearTimeAlgorithm2007,gregoryFindingOverlappingCommunities2010a,xieSLPAUncoveringOverlapping2011}. However, these approaches have a significant sociological and practical limitation. In label propagation frameworks, nodes explicitly exchange and update discrete community identifiers. This mechanism assumes that agents are willing and able to broadcast their group affiliations explicitly to all neighbors. In realistic social and distributed systems, however, communicating explicit community tags is often unnatural or may violate privacy constraints. Individuals do not interact by exchanging community IDs; rather, they share context-specific opinions or actions, and community boundaries are implicitly inferred from patterns of agreement and disagreement over time.

In this paper, we propose a decentralized framework for community detection driven entirely by nonlinear social learning dynamics. To our knowledge, this is the first application of continuous-time saturated opinion dynamics as a privacy-preserving clustering tool. The saturated nonlinear model from \cite{bizyaevaNonlinearOpinionDynamics2023} has recently been applied to identify Stochastic Block Models (SBMs) \cite{xingLearningCommunitiesEquilibria2024}, but previous efforts were restricted to two blocks and relied on a centralized perspective. In our approach, agents do not reveal their continuous internal opinions, nor do they communicate explicit community labels. Instead, they exchange only saturated, issue-specific actions. By tuning a simple interaction gain parameter beyond the spectral threshold, local agents can autonomously evaluate their structural affinity with their neighbors.

Our goal here is not to compete with highly optimized centralized algorithms like the Louvain or Leiden methods. Instead, we show that a fully decentralized process can still extract community structures of the with same properties. To achieve this we will use a nonlinear behavioral model for the node that is similar to social learning. Subsequently, to map continuous polarized states into robust discrete network communities while mitigating the sensitivity of the learning process to random initial conditions, we formulate and compare three distinct algorithmic implementations:
\begin{enumerate}
    \item \textbf{Recursive Neighbor Pruning (RNP)}, a baseline approach that recursively fragments the network based on the steady-state polarization of the agents from a single dynamic run;
    \item \textbf{Recursive Neighbor Pruning with Decaying Confidence (RNP-DC)}, a bounded confidence model where the network topology dynamically adapts during the transient phase of the opinion dynamics, severing ties based on a strictly shrinking tension threshold, effectively generalizing the foundational concepts introduced in \cite{morarescuOpinionDynamicsDecaying2011}.
    \item \textbf{Score-based Edge Reliability (SER)}, a static-graph approach that samples the basins of attraction over multiple independent realizations (simulating multiple, independent discussion topics) to construct a robust probabilistic map of intra-community links.
\end{enumerate}

The remainder of this paper is organized as follows. Section \ref{sec:problem_formulation} defines the network topology and the decentralized constraints of the problem. Section \ref{sec:opinion_dynamics} introduces the nonlinear interaction model, detailing its spectral threshold and bifurcation properties. Section \ref{sec:algorithms} formally presents the three community detection algorithms and discusses their respective analytical and computational tradeoffs. Finally, Section \ref{sec:numerical_validation} provides numerical validation and benchmarking against standard modularity metrics on synthetic and real-world networks, followed by concluding remarks in Section \ref{sec:conclusion}.

\textbf{Notation} In the following, we denote by $\R$ the set of real numbers. For a vector $\bm{x} \in \R^N$, we denote by $x_i$ the $i$-th component of $\bm{x}$. For a matrix $\bm{A} \in \R^{N\times N}$, we denote by $a_{ij}$ the element of $\bm{A}$ at the $i$-th row and $j$-th column. We denote by $\zero$ and $\one$ the vector of $\R^N$ with all components equal to $0$ and $1$, respectively. For a column vector $\bm{x} \in \R^N$, we denote by $\norm{\bm{x}} = (\bm{x}^\top \bm{x})^{\frac{1}{2}}$ the Euclidean norm of $\bm{x}$ and, for a positive definite matrix $\bm{D}$, by $\norm{\bm{x}}_{\bm{D}} = (\bm{x}^\top \bm{D} \bm{x})^{\frac{1}{2}}$ the norm induced by $\bm{D}$.
Moreover, we denote by $\mathrm{diag}(\bm{x}) \in \R^{N\times N}$ the diagonal matrix with diagonal elements given by the vector $\bm{x} \in \R^{N}$. For two vectors $\bm{x},\bm{y} \in \R^N$, $\bm{x} \leq \bm{y}$ means $x_i \leq y_j$ for all $i \in \{1,\dots,N\}$, if, in addition $\bm{x}\neq \bm{y}$, we note $\bm{x} < \bm{y}$. For a function $f: \mathcal{X} \to \mathcal{X}$, the set $\mathrm{Fix}(f) = \left\{ x \in \mathcal{X} \mid x =  f(x) \right\}$ contains the fixed points of $f$.

\section{Problem Formulation: Network Topology and Decentralized Clustering}
\label{sec:problem_formulation}

\subsection{Graph Structure and Network Interaction}

We consider a network of $N$ agents (or nodes) interacting over a graph $\mathcal{G} = (\mathcal{V}, \mathcal{E})$. Here, $\mathcal{V} = \{1, \dots, N\}$ is the set of vertices and $\mathcal{E} \subset \mathcal{V} \times \mathcal{V}$ is the set of edges. An edge $(i,j) \in \mathcal{E}$ indicates that agent $i$ receives information from agent $j$. The neighborhood of agent $i$, defined as the set of agents from which it receives information, is denoted by $\mathcal{N}_i = \{ j \in \mathcal{V} \mid (j,i) \in \mathcal{E} \}$, with its size (or in-degree) given by $d_i = |\mathcal{N}_i|$. A \emph{path} in $\Gcal$ is a finite sequence of edges $(i_1,i_2),(i_2,i_3),\dots,(i_p,i_{p+1})$ such that $(i_k,i_{k+1})\in \Ecal$  for all $k\in \{1,\dots,p\}$. Two vertices $i$, $j\in \Vcal$ are \emph{connected} in $\Gcal$ if there exists a path in $\Gcal$ joining $i$ and $j$ (i.e. $i_1=i$ and $j_p=j$). The graph $\Gcal$ is \emph{strongly connected} if any two vertices are connected. The standing assumptions of this interaction graph are as follows.

\begin{standassumption}\label{ass:graph}
    The graph $\Gcal$ is \textbf{time-invariant, symmetric, connected and simple} (i.e., it has no self-loops or multiple edges).
\end{standassumption}

The interaction topology is encoded by the \emph{adjacency matrix} $\bm{A} \in \mathbb{R}^{N\times N}$, where $a_{ij} = 1$ if $(j,i) \in \mathcal{E}$ and $a_{ij} = 0$ otherwise. We define the \emph{degree matrix} as $\bm{D} := \mathrm{diag}(d_1, \dots, d_N)$. From these, we construct the row-stochastic, \emph{random-walk normalized adjacency matrix}, $\bm{P} := \bm{D}^{-1}\bm{A}$. Note that by Assumption~\ref{ass:graph}, $d_i \geq 1$ for all $i$, meaning $\bm{D}$ is strictly positive definite and invertible.

\begin{remark}
    The main results leverage on the symmetry of the interaction graph. However, in practice, the algorithm proposed in Section~\ref{sec:algorithms} does not require this assumption.
\end{remark}

\subsection{The Decentralized Clustering Objective}

Having defined the interaction topology $\mathcal{G}$, we now formalize the community detection objective. A standard graph partitioning approach seeks to divide the network into a set of disjoint clusters $\mathcal{C} = \{\mathcal{V}_1, \dots, \mathcal{V}_k\}$ such that $\bigcup_{m=1}^k \mathcal{V}_m = \mathcal{V}$, where each $\mathcal{V}_m$ represents a densely internally connected subgraph with sparse external boundary edges.

However, complex networks rarely decompose into perfect isolated cliques or all-to-all components. They frequently contain \emph{frontier agents} (or boundary nodes) acting as bridges between multiple communities and experiencing competing influences. Centralized algorithms often arbitrarily force these boundary nodes into a single discrete group to optimize a global metric like modularity. Our objective is twofold: identify the densely connected core communities and isolate these frontier agents based on their structural ambiguity.

We require this structure to emerge dynamically as a macro-state of the system, subject to the following strict decentralized constraints:
\begin{enumerate}
    \item \textbf{Strict Locality:} An agent $i$ updates its state based solely on the observation of its immediate neighborhood $\mathcal{N}_i$. It has no knowledge of the global adjacency matrix $\bm{A}$, the network diameter, or its global position.
    \item \textbf{Restricted Communication (Privacy):} Agents are restricted to exchanging a single, time-varying scalar variable. They cannot transmit additional graph knowledge (e.g., their degree $d_i$, or neighbor lists), nor can they communicate explicit, discrete community labels.
    \item \textbf{Uniformity:} All agents must execute the exact same local update rule, with no designated leader nodes or centralized aggregation steps.
\end{enumerate}

Equipping the agents with a nonlinear communication protocol destabilizes global consensus. The network breaks symmetry: densely connected core subgroups tend to synchronize locally (or almost synchronize) and polarize from one another, whereas frontier agents, caught between competing influences, converge to neutral states or display high variability depending on the initialization. The algorithmic objective (Section \ref{sec:algorithms}) is to recover both the core communities and the frontier nodes from these state trajectories.

\section{Opinion Dynamics with Nonlinear Interaction}
\label{sec:opinion_dynamics}

\subsection{The Discrete-Time Model}

To satisfy the decentralized constraints outlined in Section \ref{sec:problem_formulation}, we map the structural properties of $\mathcal{G}$ to the steady states of a multi-agent dynamical system. Let $x_i(t) \in [-1,1]$ denote the private internal opinion of agent $i$ at discrete time step $t \in \mathbb{N}$. The interval $[-1,1]$ represents a normalized spectrum of belief regarding a specific topic, where $x_i \approx -1$ indicates strong opposition, $x_i \approx 1$ indicates strong support, and $x_i = 0$ represents strict neutrality.

Since internal opinions are private, agents interact only through their neighbors' observable actions. We model the transformation from private belief to public, saturated action through a common \textbf{signal function} $s: [-1,1] \to [-1,1]$.

Opinion updates occur over discrete communication steps and reflect a consensus-seeking behavior: an agent conforms to its local social neighborhood. This update is governed by the difference between the agent's current internal opinion and the average of the signals received from its neighbors. Formally, for any agent $i \in \Vcal$, this is modeled as the discrete-time social learning dynamics, given by:
\begin{equation}\label{eq:dynamic_single_agent}
    x_i(t+1) = x_i(t) + h \left( \frac{1}{d_i} \sum_{j \in \mathcal{N}_i} a_{ij} s(x_j(t)) - x_i(t) \right),
\end{equation}
where $h \in (0,1]$ is the step size, representing the learning rate or bounded inertia of the agents. The quantity in parentheses is a local disagreement term: when it is positive, the agent's state increases at the next time step; when it is negative, the state decreases.

To illustrate this sociologically, consider a social network discussing a controversial policy. An agent $j$ might hold only a weak internal preference for the policy ($x_j = 0.2$), having reservations while leaning positive. When acting publicly (voting, posting online, answering a survey), social pressure and limited communication bandwidth compel individuals to project a definitive, polarized stance. The signal function models this effect, transmitting a saturated endorsement (e.g., $s(0.2) \approx 0.8$). An adjacent agent $i$, who might initially be strictly neutral ($x_i = 0$), does not observe the nuanced hesitation of $j$ ($0.2$); it only observes the strong public endorsement ($0.8$). If the average of the public signals in the neighborhood of $i$ is highly positive, agent $i$ experiences a positive disagreement and adjusts its internal belief upward ($x_i(t+1) > x_i(t)$) to align with the perceived local norm.

The collective dynamics of all $N$ agents can be expressed in the compact vector form:
\begin{equation}\label{eq:dynamic_vector}
    \bm{x}(t+1) = \bm{x}(t) + h \big( \bm{D}^{-1}\bm{A} \bm{s}(\bm{x}(t)) - \bm{x}(t) \big) = \bm{x}(t) + h \big( \bm{P}\bm{s}(\bm{x}(t)) - \bm{x}(t) \big).
\end{equation}
where $\bm{x}(t) = (x_1(t),\dots,x_N(t))^\top \in \mathcal{X} := [-1,1]^N$ and $\bm{s}(\bm{x}) = (s(x_1),\dots,s(x_N))^\top$.

This discrete-time formulation corresponds precisely to the Euler discretization of the continuous-time multi-agent system
\begin{equation}\label{eq:dynamic_vector_continuous}
    \dot{\bm{x}} = \bm{D}^{-1}\bm{A} \bm{s}(\bm{x}) - \bm{x} = \bm{P}\bm{s}(\bm{x}) - \bm{x},
\end{equation}
which has been recently studied in the nonlinear consensus literature \cite{couthuresGlobalSynchronizationMultiAgent2025} for cooperative graph topologies. By shifting to a discrete-time framework, we align the theoretical physical model with the iterative algorithmic implementations required to compute the community structures on graphs.

\begin{remark}
    Using the random-walk normalized matrix $\bm{P} = \bm{D}^{-1}\bm{A}$ is a central modeling choice. Unlike dynamics governed by the unweighted graph Laplacian ($\bm{L} = \bm{D} - \bm{A}$), this formulation ensures the influence on an agent is an \emph{average} of received signals rather than a sum. The rate of an agent's response is independent of its degree, preventing highly connected hubs from dominating the transient dynamics. This reflects social environments where an individual is influenced by the prevailing proportion of opinions in their peer group rather than the sheer volume of connections.
\end{remark}

\begin{remark}\label{rem:heterogeneity}
    In realistic decentralized settings like human social networks, agents exhibit diverse behavioral traits. These may include agent-specific signal functions $s_i(x)$ \cite{fontanMultiequilibriaAnalysisClass2018}, heterogeneous step sizes, inherent stubbornness \cite{friedkinSocialInfluenceOpinions1990}, or simultaneous coupling across multiple topics \cite{bizyaevaNonlinearOpinionDynamics2023}. While incorporating such heterogeneity adds descriptive realism, it does not fundamentally alter the underlying macroscopic polarization mechanics, provided the individual signal functions satisfy the basic monotonicity and gain assumptions discussed below. For the engineering objective of community detection, these complex features are not required. Furthermore, assigning accurate individual behavioral parameters requires empirical metadata that is entirely unavailable in a privacy-preserving, decentralized context. We adopt the homogeneous, idealized model in \eqref{eq:dynamic_vector}. This ensures that any clustering behavior that emerges is driven strictly by the underlying network topology, rather than artificially introduced behavioral differences. However, if one seeks behavioral realism and possesses the necessary data to simulate such a system, a more flexible version of the model \eqref{eq:dynamic_vector} can be utilized, as presented in \cite{bizyaevaNonlinearOpinionDynamics2023, bizyaevaMultitopicBeliefFormation2025a}, where interconnected multi-dimensional opinion influence is captured.
\end{remark}

\subsection{Signal Function and Communication Model}

\begin{figure}[t]
    \centering
    \begin{tikzpicture}[scale=0.8]
        \begin{axis}[
                axis lines=middle,
                xlabel={$x$},
                ylabel={$s(x)$},
                xmin=-1.1, xmax=1.1,
                ymin=-1.1, ymax=1.1,
                xtick={-1, 0, 1},
                ytick={-1, 0, 1},
                width=9cm,
                height=9cm,
                grid=major,
                legend style={
                        at={(1.1, 0)},     
                        anchor=south west,  
                        name=mylegend       
                    },
            ]
            \addplot[domain=-1:1, samples=2, dashed, color=gray] {x};
            \addlegendentry{$y=x$}
            \addplot[domain=-1:1, samples=200, color=myblue, thick] {min(1,max(-1,4*x))};
            \addlegendentry{$s(x) = \min(1,\max(-1,Kx))$}

            \addplot[domain=-1:1, samples=200, color=myred, thick] {tanh(2.5*x)};
            \addlegendentry{$s(x) = \tanh(Kx)$}

            \node[circle,fill=white,draw=black,inner sep=2pt] at (axis cs:0,0) {};
            \node[circle,fill=black,draw=black,inner sep=2pt] at (axis cs:0.98,0.98) {};
            \node[circle,fill=black,draw=black,inner sep=2pt] at (axis cs:-0.98,-0.98) {};

            \node[anchor=south east] at (axis cs:0,0) {Unstable};
            \node[anchor=north west] at (axis cs:0.65,0.9) {Stable};
        \end{axis}

        \path let \p1 = ($(mylegend.south east) - (mylegend.south west)$) in ([xshift=-\x1]current axis.south west) -- (current axis.south west);

    \end{tikzpicture}
    \caption{\textbf{Example of nonlinear interaction functions.} Two examples of signal functions with an unstable origin: a smooth sigmoidal function $s(x) = \tanh(Kx)$ (red, $K=2.5$) and a saturated piecewise-linear function $s(x) = \min(1, \max(-1, Kx))$ (blue, $K=4$). The intersection with the identity line $y=x$ defines the fixed points. A steep slope at the origin ($K>1$) destabilizes the neutral consensus state $\bm{x}=\bm{0}$, forcing the system to evolve toward the stable clustered equilibria.}
    \label{fig:signal_function}
\end{figure}

The nonlinearity introduced by $s(x)$ determines the macroscopic behavior of the network. To sustain the disagreements necessary for community detection, we enforce the following assumption on the signal function.

\begin{standassumption}\label{ass:signal}
    The signal function $s: [-1,1] \to [-1,1]$ is non-decreasing and Lipschitz-continuous with constant $K > 0$ that is smooth at $0$ and $s'(0) = K$.
\end{standassumption}

The requirement that $s$ must be non-decreasing is central to our analysis. It captures a natural causality inherent to many social systems: a higher internal state must induce a non-lower public signal. A scenario where a positive shift in internal opinion causes a negative shift in public action would contradict cooperative local agreement-seeking behavior. Mathematically, this non-decreasing property guarantees that the dynamical system is monotone \cite{hirschChapter4Monotone2006}, preventing chaotic oscillations and ensuring convergence to steady states as proven in \cite{couthuresConsensusRobustClustering2025c}.

This formulation protects agent privacy by functioning as an imperfect information channel. If $s(x)$ were invertible, any neighboring agent could completely reverse-engineer the private internal state of agent $i$ by simply applying the inverse function, i.e., $x_i = s^{-1}(s(x_i))$. By utilizing a saturated, non-bijective signal function, we ensure that for any $x_1 \in [-1,1]$, there exists a distinctly different internal state $x_2 \in [-1,1]$ ($x_1 \neq x_2$) that maps to effectively the same public action, $s(x_1) = s(x_2)$.

Consequently, an observer seeing a saturated signal ($s(x) \approx 1$) knows the agent's internal opinion is positive, but cannot precisely differentiate between a moderately positive stance and an extreme one. The internal state remains hidden behind the public action. We therefore favor strongly saturated signal functions (blue curve, Fig.~\ref{fig:signal_function}) over strictly bijective ones (red line, Fig.~\ref{fig:signal_function}) to enforce structural polarization and strict data privacy.

As established in \cite{couthuresConsensusRobustClustering2025c}, the equilibria of the system are linked to the fixed points of the signal function, defined as the set $\mathrm{Fix}(s) = \{c \in [-1,1] \mid c = s(c)\}$. A fixed point represents a state where an agent communicates its exact internal opinion without any nonlinear distortion. When all agents reach a common state $c \in \mathrm{Fix}(s)$, the system achieves a consensus state. To characterize how $s(x)$ shapes the dynamics around these equilibria, we classify the distortion and local stability based on the term $s(x)-x$.

\begin{definition}[{\cite{couthuresConsensusRobustClustering2025c}}]\label{def:fixed_point_stability}
    A signal function $s$ is said to be:
    \begin{itemize}
        \item an \emph{underestimation} if $x(s(x) - x) \leq 0$ for all $x \in \left[-1,1\right]$.
        \item an \emph{overestimation} if $x(s(x) - x) \geq 0$ for all $x \in \left[-1,1\right]$.
    \end{itemize}
    Furthermore, a fixed point $c \in \mathrm{Fix}(s)$ is classified based on its one-sided stability:
    \begin{itemize}
        \item It is \emph{left-stable} (resp. \emph{right-stable}) if there exists a neighborhood $I \subseteq [-1,1]$ of $c$ such that $(x-c)(s(x)-x)\leq 0$ for all $x \in I$ with $x<c$ (resp. $x>c$).
        \item It is \emph{left-unstable} (resp. \emph{right-unstable}) if the strict inequality $(x-c)(s(x)-x)> 0$ holds under the same conditions.
    \end{itemize}
    Based on these properties, we define a fixed point as \emph{stable} if it is both left- and right-stable, \emph{unstable} if it is both left- and right-unstable, and \emph{semi-stable} otherwise.
\end{definition}

These classifications carry direct sociological and dynamical interpretations. Sociologically, \emph{underestimation} implies agents moderate their beliefs during communication, drawing the network toward a central consensus. Conversely, \emph{overestimation} amplifies internal states, driving polarization. Dynamically, the term $s(x)-x$ acts as a restoring force around \emph{stable} fixed points and a repelling force around \emph{unstable} or \emph{semi-stable} points.

As shown in \cite{couthuresConsensusRobustClustering2025c}, this one-dimensional classification dictates the global behavior of the network. We partition the fixed points into a stable set, $\mathrm{Fix}^{\bullet}(s)$, and an unstable/semi-stable set, $\mathrm{Fix}^{\circ}(s)$. While global consensus requires convergence to a point in $\mathrm{Fix}^{\bullet}(s)$, sustained disagreement requires at least one unstable point in $\mathrm{Fix}^{\circ}(s)$. This unstable point acts as a structural splitter, forcing agents on either side to diverge toward opposing, stable equilibria.

\section{Model Properties: Structural Splitting and Spectral Clustering}
\label{sec:structural_splitting}

While an unstable fixed point triggers structural splitting, whether the network fragments depends on a balance between local node dynamics and global graph topology. This section summarizes the mathematical properties governing this phase transition to justify the algorithmic methods proposed in Section~\ref{sec:algorithms}. Formal proofs for the results in this section are established in our companion paper \cite{couthuresConsensusRobustClustering2025c}.

\subsection{The Spectral Threshold for Consensus}

To quantify the structural connectivity of the network, we examine the spectrum of the row-stochastic matrix $\bm{P} = \bm{D}^{-1}\bm{A}$. By the Perron-Frobenius theorem, for a strongly connected graph, the largest eigenvalue is $\lambda_N = 1$, associated with the consensus eigenvector $\bm{1}$. We order the real eigenvalues as $-1 \leq \lambda_1 \leq \dots \leq \lambda_{N-1} < 1$, where $1 - \lambda_{N-1}$ is the algebraic connectivity of the random-walk normalized Laplacian. A value of $\lambda_{N-1}$ close to $1$ indicates a network with severe topological bottlenecks (sparse cuts between dense communities), whereas a value close to $0$ or negative indicates a highly cohesive, densely connected graph.

The global averaging effect of this topology competes directly with the local polarizing force of the signal function, parameterized by the origin gain $K = s'(0)$. This competition creates a sharp threshold for consensus.

\begin{theorem}[\cite{couthuresConsensusRobustClustering2025c}] \label{thm:sharp_threshold}
    If the gain at the origin satisfies $K\lambda_{N-1} < 1$, then all equilibria of the system \eqref{eq:dynamic_vector_continuous} are strictly consensus states, belonging to $\mathrm{Span}(\bm{1})$. Furthermore, this consensus manifold is globally attractive.
\end{theorem}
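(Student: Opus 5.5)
The plan is to use a Lyapunov-type argument in the $\bm{D}$-weighted inner product, where $\bm{P}=\bm{D}^{-1}\bm{A}$ is self-adjoint. Since $\bm{A}$ is symmetric, $\bm{P}$ is similar to $\bm{D}^{-1/2}\bm{A}\bm{D}^{-1/2}$. Hence its eigenvalues are real and its eigenvectors can be taken $\bm{D}$-orthonormal, with $\one$ spanning the top eigenspace. Let $\bm{\Pi} = \one\one^\top\bm{D}/(\one^\top\bm{D}\one)$ be the $\bm{D}$-orthogonal projector onto $\mathrm{Span}(\one)$, and set $\bm{y}=(\bm{I}-\bm{\Pi})\bm{x}$. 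On the range of $\bm{I}-\bm{\Pi}$, the form $\bm{z}^\top\bm{D}\bm{P}\bm{z}=\bm{z}^\top\bm{A}\bm{z}$ is bounded above by $\lambda_{N-1}\norm{\bm{z}}_{\bm{D}}^2$. This Rayleigh-quotient bound is the only spectral input.

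\textbf{Equilibria.} Let $\bm{x}^*=\bm{P}\bm{s}(\bm{x}^*)$ be an equilibrium, and write $\bm{u}=\bm{s}(\bm{x}^*)$ and $\tilde{\bm{u}}=(\bm{I}-\bm{\Pi})\bm{u}$. Since $\bm{P}$ commutes with $\bm{\Pi}$, we get $\bm{y}^*=\bm{P}\tilde{\bm{u}}$. Define $\phi(\bm{x})=\bm{x}^\top\bm{D}(\bm{I}-\bm{\Pi})\bm{s}(\bm{x}) = \sum_{i,j}$-type pairwise form. A convenient identity is $\tilde{\bm{u}}^\top\bm{D}\bm{y}^* = \frac{1}{2\,\one^\top\bm{D}\one}\sum_{i,j}d_id_j(u_i-u_j)(x_i^*-x_j^*)$. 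By monotonicity and the Lipschitz bound, $(u_i-u_j)(x_i-x_j)\ge \frac{1}{K}(u_i-u_j)^2$. Applying the same pairwise formula to $\norm{\tilde{\bm u}}_{\bm D}^2$ then gives
\begin{equation*}
\tilde{\bm{u}}^\top\bm{D}\bm{y}^*\ \ge\ \tfrac{1}{K}\norm{\tilde{\bm{u}}}_{\bm{D}}^2 .
\end{equation*}
On the other hand, $\tilde{\bm{u}}^\top\bm{D}\bm{y}^*=\tilde{\bm{u}}^\top\bm{D}\bm{P}\tilde{\bm{u}}\le\lambda_{N-1}\norm{\tilde{\bm{u}}}_{\bm{D}}^2$. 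Since $K\lambda_{N-1}<1$, this forces $\tilde{\bm{u}}=\zero$. Then $\bm{y}^*=\bm{P}\tilde{\bm u}=\zero$, so $\bm{x}^*\in\mathrm{Span}(\one)$.

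There is a delicate case when $\lambda_{N-1}\le 0$. There the upper bound is nonpositive, while the lower bound is nonnegative. The conclusion still follows, since equality would require $\tilde{\bm u}=\zero$, or we may simply take $\max(\lambda_{N-1},0)$.

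\textbf{Global attractivity.} I would reuse the same computation along trajectories. Take the candidate $V(\bm{x})=\frac12\norm{(\bm{I}-\bm{\Pi})\bm{x}}_{\bm{D}}^2$. Its derivative along \eqref{eq:dynamic_vector_continuous} is $\dot V=\bm{y}^\top\bm{D}\bm{P}\tilde{\bm{u}}-\norm{\bm y}_{\bm D}^2$, with $\bm{u}=\bm{s}(\bm{x})$. The main obstacle is bounding the cross term $\bm{y}^\top\bm{D}\bm{P}\tilde{\bm u}$ by something strictly smaller than $\norm{\bm y}_{\bm D}^2$ using only $K\lambda_{N-1}<1$.

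A naive bound $\norm{\bm P\tilde{\bm u}}\le\lambda\norm{\tilde{\bm u}}\le K\lambda\norm{\bm y}$ fails for two reasons. The eigenvalues near $-1$ may have modulus larger than $\lambda_{N-1}$. Also, $\norm{\tilde{\bm u}}_{\bm D}\le K\norm{\bm y}_{\bm D}$ is not immediate for a weighted projector, although it does follow from the pairwise representation.

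My first attempt would avoid this by exploiting monotonicity. Instead of $V$, I would use a Lur'e/Popov-type function
\begin{equation*}
W(\bm x)=\sum_i d_i\int_0^{x_i}s(\sigma)\,d\sigma-\tfrac12\bm{s}(\bm{x})^\top\bm{A}\bm{s}(\bm{x}),
\end{equation*}
suitably symmetrized. Here $\bm{D}\dot{\bm x}=\bm{A}\bm{s}-\bm{D}\bm{x}$, so the gradient-like structure $\dot W=-\sum_i d_i s'(x_i)\dot x_i^2\le 0$ holds, or rather its Hopfield-network analogue. LaSalle's principle then gives convergence to the equilibrium set, which by the previous step lies in $\mathrm{Span}(\one)$. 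If this energy route stalls on nonsmooth $s$, the fallback is monotone systems theory: the flow is cooperative and irreducible, so generic trajectories converge to equilibria. The extremal trajectories from $\pm\one$ bound every solution and converge to consensus equilibria. Since every equilibrium is consensus, one then squeezes the distance to $\mathrm{Span}(\one)$ to zero.
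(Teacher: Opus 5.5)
Your equilibrium argument is correct and self-contained. The paper itself gives no proof of this theorem; it defers to the companion paper and only alludes to monotone-systems theory for convergence. Your ingredients are sound: the $\bm D$-weighted pairwise identity, the bound $(u_i-u_j)(x_i-x_j)\ge \frac1K(u_i-u_j)^2$, and the Rayleigh bound on the $\bm D$-orthogonal complement of $\one$ together give $(1/K-\lambda_{N-1})\norm{\tilde{\bm u}}_{\bm D}^2\le 0$. The case $\lambda_{N-1}\le 0$ is not delicate, since $1/K-\lambda_{N-1}>0$ whatever the sign.

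\textbf{The gap is in global attractivity.} The $W$ you wrote is not a Lyapunov function, and the identity $\dot W=-\sum_i d_i s'(x_i)\dot x_i^2$ is false for it. Take $s(x)=\min(1,\max(-1,Kx))$ with $K>1$, on the forward-invariant region where $x_i>1/K$ for all $i$. There $\bm s(\bm x)=\one$ and $W=\sum_i d_i(x_i-\frac{1}{2K})-\frac12\one^\top\bm A\one$. Hence $\dot W=\sum_i d_i(1-x_i)>0$ unless $\bm x=\one$, whereas your formula gives $0$. Your $W$ agrees with the Hopfield energy only for linear $s$.

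The correct energy is
\begin{equation*}
E(\bm x)=\sum_i d_i\Big(x_i s(x_i)-\int_0^{x_i}s(\sigma)\,d\sigma\Big)-\tfrac12\,\bm s(\bm x)^\top\bm A\,\bm s(\bm x).
\end{equation*}
With $w_i(t)=s(x_i(t))$, which is Lipschitz in $t$, one gets almost everywhere $\dot E=\sum_i\dot w_i\big(d_ix_i-(\bm A\bm w)_i\big)=-\sum_i d_i\dot w_i\dot x_i\le 0$. This uses only that $s$ is non-decreasing, so no smoothness of $s$ is needed.

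Even with the right $E$, the LaSalle step needs an argument you skipped. Because of saturation, $\dot E=0$ does not give $\dot{\bm x}=\zero$. It gives $\dot w_i\dot x_i=0$, hence $\dot w_i=0$: if $\dot x_i=0$ then $|\dot w_i|\le K|\dot x_i|=0$. So along any complete orbit in the $\omega$-limit set, $\bm s(\bm x(t))\equiv\bar{\bm u}$ is constant and $\bm x(t)=\bm P\bar{\bm u}+e^{-t}(\bm x(0)-\bm P\bar{\bm u})$. Boundedness as $t\to-\infty$ then forces $\bm x(0)=\bm P\bm s(\bm x(0))$, so $\bm x(0)$ is an equilibrium. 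Your first part places every such point in $\mathrm{Span}(\one)$, which gives $\mathrm{dist}(\bm x(t),\mathrm{Span}(\one))\to 0$.

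Your fallback would not close the proof, for three reasons:
\begin{enumerate}
\item The flow is not irreducible where $s'=0$, which is exactly the saturated regime.
\item Generic-convergence results for monotone systems control only generic trajectories, not all of them.
\item The squeeze is vacuous in the relevant case. For example, on a complete graph with $K=4$ we have $K\lambda_{N-1}<1$, yet the trajectories from $\pm\one$ are the equilibria $\pm\one$ themselves. Being sandwiched between $-\one$ and $\one$ says nothing about the distance to $\mathrm{Span}(\one)$.
\end{enumerate}
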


Theorem \ref{thm:sharp_threshold} dictates that if the network is sufficiently dense relative to the nonlinearity, structural splitting is impossible. The topological averaging suppresses the local instability at the origin, pulling all agents into a unified consensus.

\begin{remark}
    This result provides a direct connection to the standard linear DeGroot model \cite{degrootReachingConsensus1974}. In linear consensus, $s(x) = x$, which corresponds to a constant gain of $K=1$. Since $\lambda_{N-1} < 1$ is guaranteed for any connected graph, the condition $1 \cdot \lambda_{N-1} < 1$ is trivially satisfied. This formally explains why linear dynamics inevitably erase structural diversity and cannot be used for clustering.
\end{remark}

To utilize this system for community detection, we must tune the interaction gain to violate this threshold ($K\lambda_{N-1} > 1$). This requires the network to possess topological bottlenecks (sparse cuts between dense communities), which corresponds to $\lambda_{N-1}$ being strictly positive and close to $1$. If the graph is instead highly cohesive—such as a complete or complete bipartite graph—the second-largest eigenvalue satisfies $\lambda_{N-1} \leq 0$. In these specific cases, $K\lambda_{N-1} < 1$ holds for any $K > 0$, rendering the network structurally immune to polarization and guaranteeing global consensus regardless of the chosen nonlinearity.

\subsection{Symmetry Breaking and the Attractor Landscape}
\label{sec:symmetry_breaking}

Once this spectral threshold is violated, the neutral consensus state $\bm{x} = \bm{0}$ undergoes a pitchfork bifurcation and loses stability \cite{grayMultiagentDecisionMakingDynamics2018,bizyaevaNonlinearOpinionDynamics2023}.

The mechanism by which the network breaks symmetry connects our dynamical model to classical spectral clustering. Consider the linearization of the dynamics \eqref{eq:dynamic_vector_continuous} around the unstable neutral state $\bm{x} = \bm{0}$. Since $s(x) \approx Kx$ for small $x$, the local dynamics is governed by:
\begin{equation} \label{eq:linearization}
    \dot{\bm{x}} \approx (K\bm{D}^{-1}\bm{A} - \bm{I})\bm{x}.
\end{equation}
The eigenvalues of the Jacobian $\bm{J}(\bm{0}) = K\bm{P} - \bm{I}$ are given by $\nu_i = K\lambda_i - 1$. Because $K > 1/\lambda_{N-1}$, the eigenvalue $\nu_{N-1}$ becomes strictly positive. Any small random initialization $\bm{x}(0)$ near the origin is exponentially amplified along the direction of the corresponding right eigenvector, $\bm{v}_{N-1}$.

The eigenvector $\bm{v}_{N-1}$ corresponds in sign to the Fiedler vector of the normalized graph Laplacian. In standard spectral graph theory \cite{spielmanSpectralPartitioningWorks2007}, the signs of the Fiedler vector are used to partition a graph along its sparsest cut. For initial conditions close to this $\bm{v}_{N-1}$, the early transient phase of our multi-agent system splits the network into two factions aligned with the structural boundaries of the graph.

As the state amplitudes grow, the nonlinearity takes effect. The state space evolves into an attractor landscape with multiple stable polarized equilibria and their respective basins of attraction. Because the final steady-state configuration depends on the initial state $\bm{x}(0)$, a uniform random initialization samples this landscape, allowing the network to converge to different local basins highlighting varying structural cuts. This motivates our choice of signal functions with a unique unstable fixed point at the origin (as in Fig.~\ref{fig:signal_function}), simplifying the clustering process into a binary separation around $0$.

\subsection{State Confinement and Robustness of Core Communities}
\label{sec:cluster_stability}

While spectral bisection explains above the initial transient divergence, the final stage of clustering relies on the steady-state confinement and robustness of the agents' opinions. To classify nodes into discrete communities, once a densely connected core polarizes, its members must stabilize and reject conflicting influences from external frontier agents, preventing them from switching back across the neutral origin.

We formalize this robustness by analyzing a tightly knit subset of agents, representing a candidate core community $\Vcal' \subseteq \Vcal$, as an open dynamical sub-system subjected to external perturbations. For any agent $i \in \Vcal'$, we partition its degree $d_i$ into its internal degree $d_i^I = |\mathcal{N}_i \cap \Vcal'|$ and its external degree $d_i^E = |\mathcal{N}_i \setminus \Vcal'|$.

Assume the agents inside $\Vcal'$ are being drawn toward a common, stable fixed point of the signal function, $c \in \mathrm{Fix}^{\bullet}(s)$. Because $c$ is a stable equilibrium, the local gain around $c$ is typically small (e.g., for a saturated piecewise-linear function, the derivative at $c=1$ is exactly zero). We capture this mathematically by defining a local Lipschitz constant $L_c \geq 0$ within a neighborhood $\mathcal{I}_c$ of $c$, such that $|s(x) - s(c)| \leq L_c |x - c|$ for all $x \in \mathcal{I}_c$.

The structural coherence of an agent $i$ with respect to this community is defined by the ratio of its internal to total connections. The following proposition establishes that, provided the community's state remains within this local neighborhood, its maximum deviation from $c$ is Input-to-State Stable (ISS) in discrete time.

\begin{proposition}[Robustness of Core Communities and Internal Coherence]\label{prop:iss_cell_dev}
    Let $\Vcal' \subseteq \Vcal$ be a sub-network and $c \in \mathrm{Fix}^{\bullet}(s)$. Assume there exists an interval $\mathcal{I}_c$ containing $c$ such that $s$ is Lipschitz-continuous with constant $L_c \geq 0$ on $\mathcal{I}_c$. For all agents $i \in \Vcal'$ at time step $t$, define:
    \begin{itemize}
        \item the \textbf{internal coherence ratio} as $\mu_i = d_i^I/d_i \in [0,1]$;
        \item the \textbf{deviation from consensus} as $e_i(t) = x_i(t) - c$;
        \item the \textbf{maximal external perturbation} as $u_i(t) = \max_{j \in \Ncal_i \setminus \Vcal'} | s(x_j(t)) - c |$, with magnitude $\|u(t)\|_{\infty} = \max_{i \in \Vcal'} u_i(t)$;
        \item the \textbf{internal stability margin} as $\eta_i = 1 - L_c \mu_i$, with $\eta = \min_{i \in \Vcal'} \eta_i$;
        \item the \textbf{effective perturbation gain} as $\gamma_i = (1 - \mu_i)/\eta_i$, with $\gamma_{\max} = \max_{i \in \Vcal'} \gamma_i$.
    \end{itemize}

    Assume the network topology and the local gain $L_c$ are such that $\eta > 0$. For any step size $h \in (0, 1]$, if $x_i(\tau) \in \mathcal{I}_c$ for all $i \in \Vcal'$ and all $0 \leq \tau \leq t$, let $\delta(t) = \max_{i \in \Vcal'} |e_i(t)|$ be the maximum absolute deviation within the core. Then, $\delta(t)$ satisfies the discrete-time ISS bound:
    \begin{equation*}
        \delta(t) \le (1 - h\eta)^t \delta(0) + \gamma_{\max} \max_{0 \le \tau < t} \|u(\tau)\|_{\infty}.
    \end{equation*}
    Consequently, provided the trajectories remain in $\mathcal{I}_c$, the ultimate asymptotic bound on the community's deviation is:
    \begin{equation*}
        \limsup_{t \to \infty} \delta(t) \le \gamma_{\max} \limsup_{t \to \infty} \|u(t)\|_{\infty}.
    \end{equation*}
\end{proposition}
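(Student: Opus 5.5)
The proof works directly on the deviation dynamics of the sub-network $\Vcal'$. It uses only the fixed-point identity $c = s(c)$ and the local Lipschitz bound on $\mathcal{I}_c$, and closes with an induction on $t$. No result from \cite{couthuresConsensusRobustClustering2025c} is needed; everything follows from \eqref{eq:dynamic_single_agent}.

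\textbf{Step 1 (deviation recursion).} Fix $i \in \Vcal'$. Since $\frac{1}{d_i}\sum_{j\in\Ncal_i} a_{ij} = 1$ and $s(c) = c$, subtracting $c$ from both sides of \eqref{eq:dynamic_single_agent} gives
\begin{equation*}
e_i(t+1) = (1-h)\,e_i(t) + \frac{h}{d_i}\sum_{j\in\Ncal_i\cap\Vcal'}\big(s(x_j(t))-s(c)\big) + \frac{h}{d_i}\sum_{j\in\Ncal_i\setminus\Vcal'}\big(s(x_j(t))-c\big).
\end{equation*}

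\textbf{Step 2 (one-step bound).} Each term is bounded separately.
\begin{itemize}
\item Since $h \le 1$, the coefficient $1-h$ is nonnegative, so the first term is at most $(1-h)\delta(t)$.
\item For internal neighbours, $x_j(t)$ and $c$ both lie in $\mathcal{I}_c$, so $|s(x_j(t))-s(c)| \le L_c|e_j(t)| \le L_c\,\delta(t)$. There are $d_i^I$ such terms.
\item Each external term is at most $u_i(t) \le \|u(t)\|_\infty$, and there are $d_i^E$ of them.
\end{itemize}
Dividing by $d_i$ turns the neighbour counts into $\mu_i$ and $1-\mu_i$. This yields
\begin{equation*}
|e_i(t+1)| \le (1-h\eta_i)\,\delta(t) + h(1-\mu_i)\|u(t)\|_\infty = (1-h\eta_i)\,\delta(t) + h\eta_i\gamma_i\,\|u(t)\|_\infty .
\end{equation*}
Here $\eta_i \le 1$ and $h \le 1$, so $1-h\eta_i \in [0,1)$. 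This nonnegativity is what lets the coefficients be compared below.

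\textbf{Step 3 (induction).} Let $U_t = \max_{0\le\tau<t}\|u(\tau)\|_\infty$, which is nondecreasing in $t$ (take $U_0=0$). The claim is $\delta(t) \le a_t + \gamma_{\max}U_t$ with $a_t = (1-h\eta)^t\delta(0)$; the case $t=0$ is trivial.

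Assume the claim at $t$. Insert it into Step 2 and use $U_t \le U_{t+1}$, $\|u(t)\|_\infty \le U_{t+1}$ and $\gamma_i \le \gamma_{\max}$. This gives
\begin{equation*}
|e_i(t+1)| \le (1-h\eta_i)\,a_t + \big[(1-h\eta_i) + h\eta_i\big]\gamma_{\max}U_{t+1}.
\end{equation*}
The key point is that the perturbation weights combine convexly: $(1-h\eta_i) + h\eta_i = 1$, so the term $\gamma_{\max}U_{t+1}$ does not accumulate over iterations. Finally, $\eta_i \ge \eta$ gives $(1-h\eta_i)a_t \le a_{t+1}$. Taking the maximum over $i$ gives $\delta(t+1) \le a_{t+1} + \gamma_{\max}U_{t+1}$, which is the ISS bound.

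\textbf{Step 4 (asymptotic bound).} The main obstacle is the asymptotic statement. The bound $\gamma_{\max}\max_\tau\|u(\tau)\|_\infty$ involves the supremum over the whole past, not the $\limsup$. To pass to the $\limsup$, restart the estimate at an arbitrary time $T$, which is allowed because the hypothesis is shift-invariant:
\begin{equation*}
\delta(t) \le (1-h\eta)^{t-T}\delta(T) + \gamma_{\max}\sup_{\tau\ge T}\|u(\tau)\|_\infty .
\end{equation*}
The quantity $\delta(T)$ is bounded, since $x_i \in [-1,1]$. Letting $t\to\infty$ removes the first term because $0 \le 1-h\eta < 1$. Letting $T\to\infty$ then turns $\sup_{\tau\ge T}\|u(\tau)\|_\infty$ into $\limsup_{t\to\infty}\|u(t)\|_\infty$, which gives the stated bound.
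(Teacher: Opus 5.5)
Your proposal is correct and follows the paper's proof essentially step for step: the same deviation recursion via $s(c)=c$, the same one-step bound $(1-h\eta_i)\delta(t) + h\eta_i\gamma_{\max}\|u(t)\|_\infty$, an induction closed by the same convex-combination observation, and the same restart-at-$T$ argument to pass to the $\limsup$. The only differences are cosmetic. You bound every $|e_i(t+1)|$ and then take the maximum, whereas the paper fixes the maximizing index $m$ in advance, and you make the empty-maximum convention $U_0=0$ explicit.
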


With Proposition~\ref{prop:iss_cell_dev}, we can interpret a core community as an elastic body under continuous external stress. Here, the internal coherence ratio ($\mu_i$) defines the network's structural stiffness. Because of the ISS condition, the community behaves like a damped mass-spring system: transient disagreements decay exponentially, and agents can only be pushed so far from their local consensus. We bound this maximum displacement by taking the maximal external force ($u(t)$) and scaling it by the effective perturbation gain ($\gamma_{\max}$). The denser the internal topology, the ``stiffer'' the community: meaning it will by itself absorbs external noise and stops frontier nodes from dragging the core out of its stable state.

\begin{remark}[Impact of Internal Coherence on Robustness]
    The parameter $\gamma_i$ links an agent's topological placement and dynamic robustness. For typical stable equilibria where $L_c < 1$, the effective perturbation gain $\gamma_i(\mu_i) = (1-\mu_i) / (1-L_c \mu_i)$ is a strictly decreasing function of the internal coherence $\mu_i = d^I_i/d_i$. Taking the derivative yields $\gamma_i'(y) = (L_c - 1) / (1-L_c y)^2 < 0$ for any $y \in [0,1]$. This guarantees that agents with denser internal connections (higher $\mu_i$) exhibit a tighter asymptotic bound against external perturbations, validating the intuition that core community members are intrinsically more resilient to frontier influences.
\end{remark}

However, while ISS provides an ultimate bound on the influence of external nodes, community detection algorithms mapping states to discrete partitions rely on strict state confinement. Because the system is cooperative (monotone) and the discrete update operates as a strict convex combination for $h \in (0,1]$, the state space $\mathcal{X} = [-1,1]^N$ is partitioned into forward-invariant hypercubes \cite{couthuresConsensusRobustClustering2025c}. To confidently classify nodes, we must guarantee that once a dense core brings its members sufficiently close to a stable fixed point $c$, no external topology can pull them back across the neutral origin.

\begin{proposition}[Forward Invariance of Community States]\label{prop:cell_deviation_bound}
    Let $\Vcal' \subseteq \Vcal$ be a candidate community and $c \in \mathrm{Fix}^{\bullet}(s)$. For a given tolerance $\varepsilon > 0$, define the \textbf{maximum external perturbation amplitude} as:
    \begin{align*}
        p(c,\varepsilon) & = \max_{y \in \{-1,1\}, \, \sigma \in \{-1,1\}} \big| s(y) - (c + \sigma \varepsilon)\big|,
    \end{align*}
    and the \textbf{minimal restoring force amplitude} generated by the local consensus as:
    \begin{align*}
        r(c,\varepsilon) & = \min\big((c+\varepsilon) - s(c+\varepsilon), s(c-\varepsilon) - (c-\varepsilon) \big).
    \end{align*}

    If, for all agents $i \in \Vcal'$, the network topology satisfies the strict boundary condition:
    \begin{equation} \label{eq:condition_invariant}
        d^E_i p(c,\varepsilon) \leq d^I_i r(c,\varepsilon),
    \end{equation}
    and the step size satisfies $h \in (0,1]$, then the hypercube
    $\mathcal{H}(c, \varepsilon) = \{ \bm{x} \in \mathcal{X} \mid \forall i \in \Vcal', x_i \in [c-\varepsilon, c+\varepsilon] \}$
    is forward invariant under the discrete-time dynamics \eqref{eq:dynamic_vector}.
\end{proposition}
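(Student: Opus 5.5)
The plan is a direct one-step argument. Since $\mathcal{H}(c,\varepsilon)$ is defined by coordinate-wise interval constraints on the agents of $\Vcal'$ only, and the rest of the state always stays in $\mathcal{X}$, it is enough to show the following. If $\bm{x}(t)\in\mathcal{H}(c,\varepsilon)$, then $x_i(t+1)\in[c-\varepsilon,c+\varepsilon]$ for every $i\in\Vcal'$. Induction on $t$ then gives forward invariance.

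First rewrite the update \eqref{eq:dynamic_single_agent} as a convex combination,
\begin{equation*}
x_i(t+1) = (1-h)\,x_i(t) + h\, m_i(t), \qquad m_i(t) = \frac{1}{d_i}\sum_{j\in\Ncal_i} s(x_j(t)).
\end{equation*}
Because $h\in(0,1]$ and $x_i(t)\in[c-\varepsilon,c+\varepsilon]$ by hypothesis, it suffices to prove $m_i(t)\in[c-\varepsilon,c+\varepsilon]$. The interval is convex, so a convex combination of two of its points stays in it.

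For the upper bound, split the sum defining $d_i m_i$ into internal and external neighbours.
\begin{itemize}
\item For an internal neighbour $j\in\Ncal_i\cap\Vcal'$ we have $x_j\le c+\varepsilon$. Monotonicity of $s$ (Standing Assumption~\ref{ass:signal}) then gives $s(x_j)\le s(c+\varepsilon)$. By definition of $r(c,\varepsilon)$, $s(c+\varepsilon)\le (c+\varepsilon)-r(c,\varepsilon)$.
\item For an external neighbour, $x_j\in[-1,1]$, so monotonicity gives $s(x_j)\le s(1)$. Taking $y=1$, $\sigma=1$ in the definition of $p(c,\varepsilon)$ gives $s(1)\le (c+\varepsilon)+p(c,\varepsilon)$.
\end{itemize}
Summing the two contributions,
\begin{equation*}
d_i m_i \le d_i(c+\varepsilon) + d_i^E p(c,\varepsilon) - d_i^I r(c,\varepsilon) \le d_i(c+\varepsilon),
\end{equation*}
where the last inequality is exactly condition \eqref{eq:condition_invariant}.

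The lower bound is symmetric.
\begin{itemize}
\item For internal neighbours, $s(x_j)\ge s(c-\varepsilon)\ge (c-\varepsilon)+r(c,\varepsilon)$.
\item For external neighbours, $s(x_j)\ge s(-1)\ge (c-\varepsilon)-p(c,\varepsilon)$, using $y=-1$, $\sigma=-1$ in $p$.
\end{itemize}
Hence $d_i m_i\ge d_i(c-\varepsilon) + d_i^I r - d_i^E p\ge d_i(c-\varepsilon)$, again by \eqref{eq:condition_invariant}.

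There is no deep obstacle; the main points needing care are bookkeeping. One must use the right choice of $(y,\sigma)$ in $p(c,\varepsilon)$ for each direction. One must also note that only monotonicity of $s$ is needed, not stability of $c$ or any Lipschitz bound; stability of $c$ only matters in that it makes $r(c,\varepsilon)\ge 0$, so the condition is satisfiable. Finally, the implicit requirement $[c-\varepsilon,c+\varepsilon]\cap[-1,1]$ is handled automatically, since the convex-combination form of the update already keeps $\mathcal{X}$ invariant.
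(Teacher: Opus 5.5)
Your proof is correct and matches the paper's argument essentially step for step. Like the paper, you write the update as the convex combination $(1-h)x_i(t)+h\bar{s}_i(t)$, split the neighbourhood average into internal and external parts bounded via monotonicity by $s(c\pm\varepsilon)$ and $s(\pm 1)$, close each bound with condition \eqref{eq:condition_invariant}, and treat the lower bound symmetrically; your explicit choice of $(y,\sigma)$ in $p(c,\varepsilon)$ only makes the bookkeeping slightly more transparent than the paper's rearranged inequality.
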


While Proposition~\ref{prop:iss_cell_dev} gives us an elastic bound on perturbations, Proposition~\ref{prop:cell_deviation_bound} establishes the frontier of the attraction basin. The inequality $d_i^E p(c, \varepsilon) \le d_i^I r(c, \varepsilon)$ formulates a worst-case force balance. The left side captures the worst case maximum outward pull from external connections; the right side defines the weakest possible internal restoring force drawing the agent back to $c$. When the internal force dominates, the hypercube $\mathcal{H}(c, \varepsilon)$ becomes a forward-invariant basin of attraction. Once a dense cluster of agents falls into this basin, the rest of the network lacks the topological leverage to drag them back across the unstable origin. At that point, the discrete community boundaries are permanently locked.

Together, these propositions formalize the robustness of discrete communities in our continuous-state system. Once a densely connected sub-network $\Vcal'$ possesses a sufficiently high ratio of internal edges to satisfy Proposition \ref{prop:cell_deviation_bound}, its state becomes permanently locked within an invariant hypercube by the rest of the network. This confinement proves that structural boundaries in the graph translate to deterministic boundaries in the state space.

\subsection{Stochastic Sensitivity and Frontier Nodes}
\label{sec:frontier_nodes}

While Proposition \ref{prop:cell_deviation_bound} guarantees the permanent confinement of densely connected core communities once they enter an invariant hypercube, networks rarely consist exclusively of isolated cliques. They frequently contain \emph{frontier agents}: nodes situated on the topological boundaries between multiple communities.

Because these frontier agents lack the high internal degree ratio ($d_i^I / d_i \approx 1$) required for state confinement, they experience conflicting, zero-sum influences from opposing dense groups. They are structurally prevented from fully saturating, and their asymptotic behavior becomes highly sensitive to initial conditions.

Depending on the specific random initialization $\bm{x}(0)$, a frontier node might be pulled into the positive invariant hypercube in one dynamical realization, dragged into the negative hypercube in another, or remain trapped near the unstable origin ($\bm{x}_i \approx 0$).

Evaluating the dynamics over multiple independent iterations and measuring the sensitivity of the agents' steady-state agreements separates deterministic structural bounds from the stochastic noise of initial conditions. Agents deep within a core community consistently fall into the same basin of attraction and occupy the same invariant hypercube, preserving their topological link. Conversely, the high variability of frontier nodes exposes their structural ambiguity. This principle forms the theoretical foundation for the statistical filtering methods detailed in Section \ref{sec:algorithms}.

\section{Algorithmic Implementations}\label{sec:algorithms}

\begin{figure}[t]
    \centering
    \begin{subfigure}[t]{0.32\textwidth}
        \includegraphics{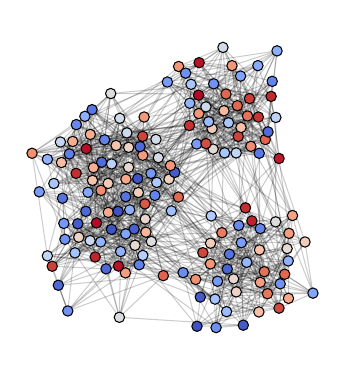}
        \caption{Step 1: Initialization and dynamics.}
        \label{fig:recursive_pruning_initial}
    \end{subfigure}
    \begin{subfigure}[t]{0.32\textwidth}
        \includegraphics{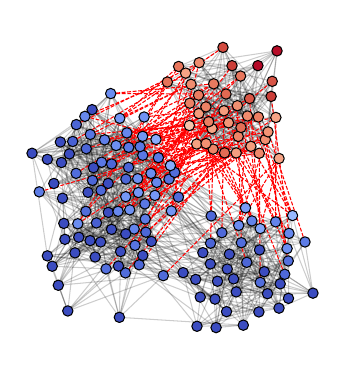}
        \caption{Step 2: Pruning of edges with opposing signs.}
        \label{fig:recursive_pruning_pruned}
    \end{subfigure}
    \begin{subfigure}[t]{0.32\textwidth}
        \includegraphics{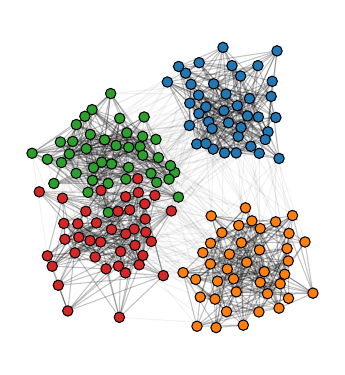}
        \caption{Final result: the graph is completely partitioned into 4 communities.}
        \label{fig:recursive_pruning_final}
    \end{subfigure}
    \caption{\textbf{Visualization of the steps in the Recursive Neighbor Pruning (RNP) Algorithm.} (a) Step 1: Agents are initialized with random continuous opinions $x_i \sim \mathcal{U}(-1, 1)$ across the fully connected topology. (b) Step 2: Once the continuous dynamics reach a polarized steady state, edges connecting agents with opposing signs ($s_i \cdot s_j < 0$) are identified as structural bottlenecks and severed (dashed red lines). (c) Final network partition obtained by recursively applying this dynamic bisection to the resulting subgraphs until no further polarization occurs. The network is a SBM with $N=160$ nodes (four blocks of $40$), intra-community probability $p_{\text{in}} = 0.30$, and inter-community probability $p_{\text{out}} \le 0.05$. The dynamics were simulated with interaction gain $K=5$ and step size $h=0.1$.}
    \label{fig:recursive_pruning}
\end{figure}

The theoretical framework in Section \ref{sec:structural_splitting} shows that tuning the interaction gain $K > 1/\lambda_{N-1}$ causes the states of the agents to polarize along the structural boundaries of the graph. Using this continuous dynamics for community detection requires a discrete edge-cutting logic. Because polarized equilibria depend on the initial state $\bm{x}(0)$, the algorithms must also account for stochastic sensitivity.

We propose three algorithms:
\begin{itemize}[nosep]
    \item Recursive Neighbor Pruning (RNP)
    \item Recursive Neighbor Pruning with Decaying Confidence (RNP-DC)
    \item Score-based Edge Reliability (SER)
\end{itemize}
to solve this mapping problem.

\subsection{The Baseline: Recursive Neighbor Pruning (RNP)}

The most direct translation of our discrete-time model into a clustering algorithm is a recursive bisection approach, detailed in Algorithm \ref{algo:RNP} and illustrated in Fig.~\ref{fig:recursive_pruning}. The protocol initializes agents' states randomly around the origin (Fig.~\ref{fig:recursive_pruning_initial}) and executes the nonlinear dynamics \eqref{eq:dynamic_single_agent} until the system reaches steady-state equilibrium ($\max_i |\Delta x_i| < 10^{-6}$).

Because the system is driven into invariant hypercubes associated with stable fixed points (Section \ref{sec:cluster_stability}), the network polarizes into two subsets: agents with positive saturated signals and agents with negative ones. The algorithm evaluates the network structure. If adjacent agents $i$ and $j$ hold opposing signs ($s_i \cdot s_j < 0$), they mathematically occupy disjoint invariant hypercubes. The edge $(i,j)$ represents a topological bottleneck and is severed (Fig.~\ref{fig:recursive_pruning_pruned}). This pruning process is applied recursively to the disconnected subgraphs until no further polarization occurs, revealing the discrete communities (Fig.~\ref{fig:recursive_pruning_final}).

\begin{algorithm}[t]
    \SetAlgoLined
    \DontPrintSemicolon
    \SetKwInOut{Input}{Input}
    \SetKwInOut{Output}{Output}
    \SetKwInOut{Parameter}{Parameter}
    \SetKwProg{Fn}{Function}{}{end}
    \SetKwProg{Break}{break}{}

    \Input{Graph $\mathcal{G} = (\mathcal{V}, \mathcal{E})$}
    \Output{Graph $\mathcal{G}_{\mathrm{new}} = (\mathcal{V}, \mathcal{E}_{\mathrm{new}})$ with no edges connecting agents with opposing signs}
    \Parameter{Signal function $s$, Time step $h$, Max steps $T_{\max}$, Max retries $N_{\mathrm{retry}}$}

    \BlankLine
    \Fn{\textsc{RecursivePruning}($\mathcal{G}$)}{
        \For(\tcp*[f]{Run multiple attempts to find a stable graph}){$\mathrm{attempt} = 1$ \KwTo $N_{\mathrm{retry}}$}{

            \lForEach(\tcp*[f]{Initialize decentralized opinions}){$\mathrm{agent} \ i \in \mathcal{V}$}{
                $x_i \gets \mathcal{U}(-1, 1)$
            }

            \For(\tcp*[f]{Run dynamics up to steady state or maximum steps}){$t = 0$ \KwTo $T_{\max}$}{
                \lForEach(\tcp*[f]{Broadcast public signal}){$\mathrm{agent} \ i \in \mathcal{V}$}{
                    $s_i \gets s(x_i)$
                }
                \ForEach{$\mathrm{agent} \ i \in \mathcal{V}$}{
                    $\Delta x_i = \frac{1}{d_i} \sum_{j \in \mathcal{N}_i} \left( s_j - x_i \right)$ \tcp*{Individual opinion delta from neighbors' signals}
                    $x_i \gets x_i + h \Delta x_i$ \tcp*{Update individual opinion}
                }

                \lIf(\tcp*[f]{Local early stopping check}){$\max_{i \in \mathcal{V}} |\Delta x_i| < 10^{-6}$}{
                    \textbf{break}
                }
            }
            $\mathcal{E}_{\mathrm{new}} \gets \mathcal{E}$ \tcp*{Initialize new graph}
            $\mathrm{pruned} \gets \text{False}$ \tcp*{Initialize pruning flag}
            \ForEach{$\mathrm{agent} \ i \in \mathcal{V}$}{
                \ForEach{$\mathrm{neighbor} \ j \in \mathcal{N}_i$}{
                    \If(\tcp*[f]{Check for opposing signs}){$s_i \cdot s_j < 0$}{
                        $\mathcal{E}_{\mathrm{new}} \gets \mathcal{E}_{\mathrm{new}} \setminus \{(i,j)\}$ \tcp*{Remove edge from new graph}
                        $\mathrm{pruned} \gets \text{True}$ \tcp*{Set pruning flag}
                    }
                }
            }
            $\mathcal{G}_{\mathrm{new}} \gets (\mathcal{V}, \mathcal{E}_{\mathrm{new}})$\;
            \If(\tcp*[f]{Check if pruning occurred}){$\mathrm{pruned}$}{
                \Return $\textsc{RecursivePruning}(\mathcal{G}_{\mathrm{new}})$ \tcp*{Recurse on new graph}
            }
        }
        \Return $\mathcal{G}_{\mathrm{new}}$ \tcp*{Stable graph found after maximum attempts}
    }
    \caption{Recursive Neighbor Pruning (RNP)}
    \label{algo:RNP}
\end{algorithm}

While RNP provides an intuitive distributed implementation of spectral bisection, it remains highly sensitive to initial conditions. A single random initialization may inadvertently lead to a global consensus, resulting in no structural cut. To mitigate this effect, RNP requires multiple retries ($N_{\mathrm{retry}}$) per subgraph to reliably confirm whether a community is truly indivisible, a requirement that ultimately impacts computational efficiency.

\begin{algorithm}[t]
    \SetAlgoLined
    \DontPrintSemicolon
    \SetKwInOut{Input}{Input}
    \SetKwInOut{Output}{Output}
    \SetKwInOut{Parameter}{Parameter}
    \SetKwProg{Fn}{Function}{}{end}

    \Input{Adjacency matrix $\bm{A} \in \mathbb{R}^{n \times n}$, Node indices $\mathcal{V} \subseteq \{1, \dots, N\}$}
    \Output{Set of disjoint communities $\mathcal{C} = \{\mathcal{V}_1, \dots, \mathcal{V}_k\}$}
    \Parameter{Signal function $s$, Step size $h$, Max steps $T_{\max}$, Max retries $N_{\mathrm{retry}}$}

    \BlankLine
    \Fn{\textsc{EfficientCentralizedRNP}($\bm{A}$, $\mathcal{V}$)}{
        \lIf(\tcp*[f]{Base case: core community reached}){$|\mathcal{V}| \leq 3$}{
            \Return $\{\mathcal{V}\}$
        }

        $\bm{D} \gets \mathrm{diag}(\bm{A}\bm{1})$\;
        $\bm{P} \gets \bm{D}^{-1}\bm{A}$ \tcp*{Precompute row-stochastic matrix}
        $\mathrm{split\_found} \gets \text{False}$\;

        \For{$\mathrm{attempt} = 1$ \KwTo $N_{\mathrm{retry}}$}{
            $\bm{x} \gets \mathcal{U}(-1, 1)^{|\mathcal{V}|}$ \tcp*{Uniform random initialization}

            \For{$t = 1$ \KwTo $T_{\max}$}{
                $\Delta \bm{x} \gets \bm{P}\bm{s}(\bm{x}) - \bm{x}$\;
                $\bm{x} \gets \bm{x} + h\Delta \bm{x}$ \tcp*{Vectorized state update}

                \lIf(\tcp*[f]{Early stopping on steady state}){$\|\Delta \bm{x}\|_{\infty} < 10^{-6}$}{
                    \textbf{break}
                }
            }

            $\mathcal{V}^+, \mathcal{V}^- \gets \{ i \in \mathcal{V} \mid x_i \geq 0 \}, \{ i \in \mathcal{V} \mid x_i < 0 \}$ \tcp*{Partition nodes by sign}

            \If(\tcp*[f]{Valid structural bisection achieved}){$|\mathcal{V}^+| > 0$ \textbf{and} $|\mathcal{V}^-| > 0$}{
                $\mathrm{split\_found} \gets \text{True}$\;
                \textbf{break}
            }
        }

        \eIf{$\mathrm{split\_found}$}{
            $\bm{A}^+, \bm{A}^- \gets \bm{A}[\mathcal{V}^+, \mathcal{V}^+], \bm{A}[\mathcal{V}^-, \mathcal{V}^-]$ \tcp*{Extract sub-adjacency matrices}
            $\mathcal{C}^+, \mathcal{C}^- \gets \textsc{EfficientCentralizedRNP}(\bm{A}^+, \mathcal{V}^+), \textsc{EfficientCentralizedRNP}(\bm{A}^-, \mathcal{V}^-)$\;
            \Return $\mathcal{C}^+ \cup \mathcal{C}^-$
        }{
            \Return $\{\mathcal{V}\}$
        }
    }
    \caption{Efficient Centralized Implementation of RNP}
    \label{algo:RNP_Centralized}
\end{algorithm}

Although Algorithm \ref{algo:RNP} formulates the decentralized execution of the dynamics from the perspective of individual agents, directly simulating this local protocol centrally is inefficient. From a centralized computational standpoint, RNP is equivalent to a recursive bisection algorithm operating on isolated sub-adjacency matrices. As detailed in Algorithm \ref{algo:RNP_Centralized}, an efficient centralized implementation vectorizes the state update and immediately extracts independent subgraphs upon any polarization event. The dynamics is then recursively applied only to these smaller, dynamically decoupled components.

When implementing Algorithm \ref{algo:RNP_Centralized}, it may appear computationally advantageous to abandon random initial conditions for a spectral initialization---setting the initial state $\bm{x}(0)$ proportional to the Fiedler eigenvector of the normalized Laplacian. Because the Fiedler vector aligns with the unstable manifold of the origin (Section \ref{sec:structural_splitting}), this guarantees rapid, deterministic convergence to a polarized state, reducing the continuous dynamics to a nonlinear rounding scheme for standard spectral bisection. We still rely on random initialization ($\bm{x}(0) \sim \mathcal{U}(-1, 1)^N$) for three reasons.

First, an eigenvector initialization requires global topological knowledge, violating the decentralized constraints from Section \ref{sec:problem_formulation}. Second, for strict spectral bisection, classical linear algebra solvers \cite{fortunatoCommunityDetectionGraphs2010} are computationally superior to numerically integrating a differential equation. Finally, a Fiedler initialization forces a rigid binary cut. In contrast, the nonlinear system \eqref{eq:dynamic_vector} possesses a complex attractor landscape. Uniform random initialization samples this space, allowing the network to converge to various local basins. This stochastic exploration frequently fractures the network into more than two connected components upon edge pruning, identifying structural boundaries classical algorithms miss.

\subsection{Bounded Confidence Adaptation: Recursive Neighbor Pruning with Decaying Confidence (RNP-DC)}

\begin{algorithm}[t]
    \SetAlgoLined
    \DontPrintSemicolon
    \SetKwInOut{Input}{Input}\SetKwInOut{Output}{Output}
    \SetKwInOut{Parameter}{Parameter}
    \SetKwHangingKw{Break}{break}
    \SetKwProg{Fn}{Function}{}{end}

    \Input{Graph $\mathcal{G} = (\mathcal{V}, \mathcal{E})$}
    \Output{Graph $\mathcal{G}_{\mathrm{new}} = (\mathcal{V}, \mathcal{E}_{\mathrm{new}})$ with no edges between communities}
    \Parameter{Signal function $s$, Time step $h$, Max steps $T_{\max}$, Number of retrials $N_{\mathrm{retry}}$, Decay rate $\rho$, Initial radius $R$}

    \Fn{\textsc{RecursiveDecayingConfidencePruning}($\mathcal{G}$)}{

        \For{$\mathrm{attempt} = 1$ \KwTo $N_{\mathrm{retry}}$}{
            \lForEach(\tcp*[f]{Initialize decentralized opinions}){$\mathrm{agent} \ i \in \mathcal{V}$}{
                $x_i \gets \mathcal{U}(-1, 1)$
            }
            $\mathcal{E}_{\mathrm{new}} \gets \mathcal{E}$ \tcp*{Initialize new graph}
            $\mathrm{pruned} \gets \text{False}$ \tcp*{Initialize pruning flag}

            \For(\tcp*[f]{Bounded confidence transient dynamics}){$t = 1$ \KwTo $T_{\max}$}{
                $r_t \gets R \cdot \rho^t$ \tcp*{Shrinking tension threshold}
                \lForEach(\tcp*[f]{Broadcast public signal}){$\mathrm{agent} \ i \in \mathcal{V}$}{
                    $s_i \gets s(x_i)$
                }
                \ForEach(\tcp*[f]{Topology adaptation phase}){$\mathrm{edge} \ (i,j) \in \mathcal{E}$}{
                    \If(\tcp*[f]{Check if edge is severed}){$|s_i - s_j| > r_t$}{
                        $\mathcal{E}_{\mathrm{new}} \gets \mathcal{E}_{\mathrm{new}} \setminus \{(i,j)\}$ \tcp*{Remove edge from new graph}
                        $\mathrm{pruned} \gets \text{True}$ \tcp*{Set pruning flag}
                    }
                }

                \ForEach(\tcp*[f]{State update phase}){$\mathrm{agent} \ i \in \mathcal{V}$}{
                    $\mathcal{N}_i \gets \{ j \in \mathcal{V} \mid (j,i) \in \mathcal{E}_{\mathrm{new}} \}$ \tcp*{Update neighborhood of agent $i$ based on new graph}
                    $d_i \gets |\mathcal{N}_i|$ \tcp*{Update degree of agent $i$ based on new graph}
                    \lIf(\tcp*[f]{Check if agent is isolated}){$d_i = 0$}{ $\Delta x_i \gets 0$ }
                    \lElse{
                        $\Delta x_i = \frac{1}{d_i} \sum_{j \in \mathcal{N}_i} \left( s_j - x_i \right)$ \tcp*{Individual opinion delta from neighbors' signals}
                        $x_i \gets x_i + h \Delta x_i$ \tcp*{Update individual opinion}
                    }
                }

                \lIf(\tcp*[f]{Early stopping check}){$r_t < 10^{-6}$ \textbf{and} $\max |\Delta x_i| < 10^{-6}$}{
                    \textbf{break}
                }
            }
            $\mathcal{G}_{\mathrm{new}} \gets (\mathcal{V}, \mathcal{E}_{\mathrm{new}})$\;
            \If(\tcp*[f]{Check if pruning occurred}){$\mathrm{pruned}$}{
                \Return $\textsc{RecursiveDecayingConfidencePruning}(\mathcal{G}_{\mathrm{new}})$ \tcp*{Recurse on new graph}
            }
        }

        \Return $\mathcal{G}_{\mathrm{new}}$
    }
    \caption{Recursive Neighbor Pruning with Decaying Confidence (RNP-DC)}
    \label{algo:RNP-DC}
\end{algorithm}

To accelerate community decoupling and incorporate a physically realistic model of social homophily, we introduce Recursive Neighbor Pruning with Decaying Confidence (RNP-DC), detailed in Algorithm \ref{algo:RNP-DC}. The baseline RNP evaluates the network topology strictly \emph{after} the dynamics reach a steady state, focusing solely on asymptotic structural polarization. Classical spectral clustering and non-dynamical approaches lack physical information on transient dynamics, focusing only on the graph structure. RNP-DC introduces a state-dependent dynamic evaluating the transient \emph{rate} of local agreement, similar to \cite{morarescuOpinionDynamicsDecaying2011}. It assesses not just whether agents eventually align, but how rapidly they form a local consensus.

The algebraic connectivity of a subgraph dictates the speed of information diffusion. Agents in a dense core influence each other faster than agents separated by a bottleneck. We map this transient behavior to structural affinity using a geometrically shrinking tension threshold, $r_t = R \rho^t$ ($\rho \in (0,1)$). This sequence is the discrete-time equivalent to continuous exponential decay $r(t) = R e^{-\lambda t}$.

Building on the decaying confidence model for community detection \cite{morarescuOpinionDynamicsDecaying2011}, we modify the static interaction in \eqref{eq:dynamic_single_agent} into a state-dependent switching system. The graph topology $\mathcal{G}(t) = (\mathcal{V}, \mathcal{E}(t))$ evolves with the agent states at each discrete step $t \in \mathbb{N}$:
\begin{equation}\label{eq:dynamic_decaying}
    x_i(t+1) = x_i(t) + h \left( \frac{1}{d_i(t)} \sum_{j \in \mathcal{N}_i(t)} \big( s(x_j(t)) - x_i(t) \big) \right),
\end{equation}
where $x_i(t+1) = x_i(t)$ if an agent becomes completely isolated ($d_i(t) = 0$). The time-varying neighborhood $\mathcal{N}_i(t)$ and the instantaneous degree $d_i(t)$ are determined by the dynamically pruned edge set. An initial edge $(i,j) \in \mathcal{E}(0)$ is permanently severed at the first time step $\tau$ when the disagreement between the agents' public signals exceeds the decaying threshold. The logical condition for an edge to exist at step $t$ is:
\begin{equation*}\label{eq:edge_cut_condition}
    (i,j) \in \mathcal{E}(t) \iff \forall \tau \in \{0, 1, \dots, t\}, \ |s(x_i(\tau)) - s(x_j(\tau))| \leq r_\tau.
\end{equation*}

\begin{figure}[t]
    \centering

    \begin{subfigure}[t]{0.66\textwidth}
        \includegraphics{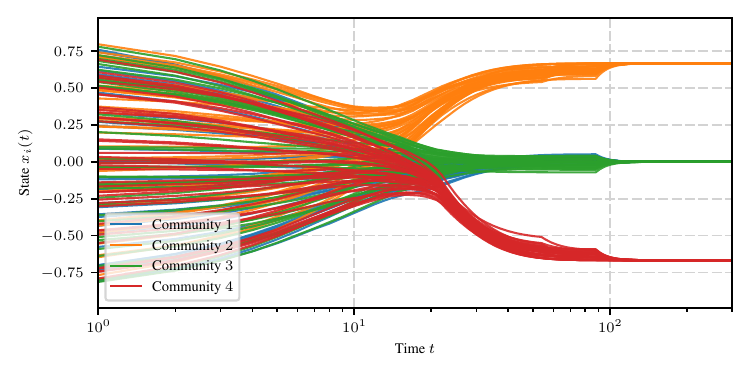}
        \caption{Opinion dynamics of the system with dynamic \eqref{eq:dynamic_decaying}.}
        \label{fig:opinion_dynamics_RNP_DC}
    \end{subfigure}
    \begin{subfigure}[t]{0.33\textwidth}
        \centering
        \begin{tikzpicture}
            \begin{axis}[
                    scale = 1,
                    axis lines=middle,
                    xlabel={$x$},
                    ylabel={$s(x)$},
                    xmin=-1.1, xmax=1.1,
                    ymin=-1.1, ymax=1.1,
                    xtick={-1, 0, 1},
                    ytick={-1, 0, 1},
                    width=\textwidth,
                    height=\textwidth,
                    grid=major,
                    legend pos=north west,
                    declare function={
                            n = 3;
                            delta = 0.01;
                            w = 2.0 / n;
                            idx(\x) = floor((\x + 1.0 - delta) / w);
                            ramp_start(\x) = idx(\x) * w + w - 1.0 - delta;
                            ramp(\x) = min(1, max(0, (\x - ramp_start(\x)) / (2.0 * delta)));
                            smoothstep(\x) = -1.0 + w * (idx(\x) + 0.5 + ramp(\x));
                        },
                ]

                \addplot[domain=-1:1, samples=2, dashed, color=gray] {x};
                \addlegendentry{$y=x$}
                \addplot [
                    domain=-0.99:0.99,
                    samples=1000, 
                    thick,
                    myblue,
                    no marks
                ] {smoothstep(x)};
                \addlegendentry{$s(x)$}

            \end{axis}
            \path (current bounding box.south) ++(0, -1.cm);
        \end{tikzpicture}

        \caption{Signal function $s(x)$ used in the system.}
        \label{fig:signal_function_RNP_DC}
    \end{subfigure}
    \caption{\textbf{States trajectories for the RNP-DC algorithm dynamics with multi-plateau interaction function.} (a) Time evolution of the agents' internal opinions $x_i(t)$ plotted on a logarithmic time scale. As the system evolves, edges are dynamically severed if the disagreement between neighbors exceeds the geometrically shrinking threshold $r_t = R \rho^t$. This decoupling allows the four ground-truth communities to achieve exact internal synchronization across the available invariant plateaus. (b) The multi-plateau signal function $s(x)$ utilized in the update rule, featuring three stable fixed points. The network is a synthetic SBM with $N=160$ nodes (four blocks of $40$), intra-community probability $p_{\text{in}} = 0.80$, and inter-community probability $p_{\text{out}} \le 0.02$. The dynamics were simulated with interaction gain $K=2.5$, step size $h=0.1$, initial tension radius $R=4.0$, and decay rate $\rho=0.98$.}
    \label{fig:rnp_dc_dynamics}
\end{figure}
This mechanism operates as a race between state convergence and mathematical threshold decay. For two agents within a tightly knit cluster, the high density of paths drives their states together faster than the boundary $r_t$ collapses, preserving their link. Conversely, for connections spanning sparse inter-community cuts, the flow of information delays local consensus. The geometrically shrinking threshold overtakes their slow convergence, cutting bridging edges well before the system reaches an asymptotic steady state.

Early topological decoupling dynamically isolates the emerging communities. Without contradictory perturbations from neighboring groups, agents in an isolated cluster achieve exact internal synchronization at a stable fixed point of the signal function.

\begin{remark}
    This exact synchronization enables an algorithmic optimization. In baseline RNP, the network remains fully connected during the transient phase, restricting us to binary, two-plateau signal functions (splitting the network above and below zero) to ensure polarization. However, because RNP-DC dynamically isolates the subgroups, we can use a multi-plateau signal function (e.g., a staircase function with multiple stable fixed points spread across $[-1,1]$). By doing so, the network splits into distinct communities simultaneously, with each group exactly synchronizing on a possibly different plateau. Community detection then simplifies to grouping nodes by their final plateau value, reducing the need for deep recursive bisections and accelerating the clustering process. However, as illustrated in Figure~\ref{fig:opinion_dynamics_RNP_DC}, nothing prevents two groups from obtaining the same plateau value, still requiring a recursive bisection.
\end{remark}

Algorithm \ref{algo:RNP-DC} can also be adapted for centralized computation, like Algorithm \ref{algo:RNP_Centralized}. A centralized simulation replaces agent-level loops with vectorized state updates and applies a binary mask to the adjacency matrix $\bm{A}$ at each time step. This eliminates redundant calculations, simulating the dynamics and topological adaptation efficiently for large networks.

\begin{remark}[Stochastic Mixing via Simulated Annealing]
    Because the RNP-DC state update in \eqref{eq:dynamic_decaying} is deterministic, it remains sensitive to initial conditions. If two strongly connected agents initialize with opposite views, the algorithm might permanently sever their tie before the local consensus can reconcile their states. To mitigate this, we can draw inspiration from simulated annealing: introducing a decaying noise term to the state updates injects stochastic fluctuations into the early transient phase. This mixing allows the system to escape accidental early polarizations. As the noise attenuates, the underlying network topology takes over, driving consensus before the decaying threshold $r_t$ freezes the graph structure.
\end{remark}

\subsection{The Statistical Approach: Score-based Edge Reliability (SER)}

\begin{algorithm}[t]
    \SetAlgoLined
    \DontPrintSemicolon
    \SetKwInOut{Input}{Input}\SetKwInOut{Output}{Output}\SetKwInOut{Parameter}{Parameter}
    \SetKwProg{Fn}{Function}{}{end}

    \Input{Graph $\mathcal{G} = (\mathcal{V}, \mathcal{E})$}
    \Output{Reliability map $\bm{P} \in [0,1]^{|\mathcal{E}|}$ for directed edges $(i,j)$}
    \Parameter{Signal func. $s$, Step size $h$, Max steps $T_{\max}$, Number of Iterations $M$, Tension threshold $\varepsilon$, Cut threshold $\tau$}

    \Fn{\textsc{ScoreBasedEdgeReliability}($\mathcal{G}$)}{
        \ForEach{$ \mathrm{directed \ edge} \ (i,j) \in \mathcal{E}$}{
            $C_{i \to j} \gets 0$ \tcp*{Initialize directed trust counter}
        }

        \For(\tcp*[f]{Social learning phase}){$ \mathrm{iteration} \ m = 1$ \KwTo $M$}{
            \lForEach(\tcp*[f]{Initialize new topic opinions}){$\mathrm{agent} \ i \in \mathcal{V}$}{
                $x_i \gets \mathcal{U}(-1, 1)$
            }

            \For(\tcp*[f]{Transient dynamics to steady state}){$t = 1$ \KwTo $T_{\max}$}{
                \lForEach(\tcp*[f]{Broadcast public signal}){$\mathrm{agent} \ i \in \mathcal{V}$}{
                    $s_i \gets s(x_i)$
                }
                \ForEach{$\mathrm{agent} \ i \in \mathcal{V}$}{
                    $\Delta x_i = \frac{1}{d_i} \sum_{k \in \mathcal{N}_i} \left( s_k - x_i \right)$ \tcp*{Calculate local influence}
                    $x_i \gets x_i + h \Delta x_i$ \tcp*{Update continuous internal state}
                }

                \lIf(\tcp*[f]{Early stopping check}){$\max_{i \in \mathcal{V}} |\Delta x_i| < 10^{-6}$}{
                    \textbf{break}
                }
            }

            \ForEach(\tcp*[f]{Evaluate post-discussion structural tension}){$ \mathrm{directed \ edge} \ (i, j) \in \mathcal{E}$}{
                $\Delta_{i \to j} \gets |x_i - s_j| \ (\mathrm{or \ alternatively} \ |s_i - s_j|)$ \tcp*{Evaluate tension (Opinion vs. Action)}
                \lIf(\tcp*[f]{Increment directed agreement counter}){$\Delta_{i \to j} < \varepsilon$}{
                    $C_{i \to j} \gets C_{i \to j} + 1$
                }
            }
        }

        $\mathcal{E}_{\mathrm{new}} \gets \mathcal{E}$\;
        \ForEach(\tcp*[f]{Statistical filtering}){$\mathrm{directed \ edge} \ (i, j) \in \mathcal{E}$}{
            $P_{i \to j} \gets C_{i \to j} / M$ \tcp*{Compute reliability score}

            \If(\tcp*[f]{Sever consistently conflicting ties}){$P_{i \to j} < \tau$}{
                $\mathcal{E}_{\mathrm{new}} \gets \mathcal{E}_{\mathrm{new}} \setminus \{(i,j)\}$
            }
        }
        \Return $\mathcal{G}_{\mathrm{new}} = (\mathcal{V}, \mathcal{E}_{\mathrm{new}})$, reliability scores $\bm{P}$
    }
    \caption{Score-based Edge Reliability (SER)}
    \label{algo:SER}
\end{algorithm}

While RNP and RNP-DC use nonlinear dynamics to fracture the network, relying on recursive bisections carries the risk of premature hard cuts. Because the discrete-time dynamics is highly sensitive to initial conditions, a single realization samples only a narrow trajectory within the broader attractor landscape. If two strongly connected agents within the same core community initialize with opposing extreme views, the transient dynamics may temporarily drive them further apart. The RNP-DC protocol penalizes this transient disagreement by permanently severing the edge, degrading the structural integrity of the core community for all subsequent integration steps.

To systematically resolve this stochastic sensitivity, we shift our analytical perspective: rather than asking ``Which discrete group does this node strictly belong to after a single run?'', we ask ``How structurally resilient is this specific edge across a diverse distribution of initial conditions?''

Sociologically, this maps to the observation that individuals rarely sever strong social ties over a single disagreement. A robust connection withstands occasional conflict stemming from divergent initial beliefs. Interpersonal links are severed only when agents consistently fail to achieve consensus across multiple independent issues.

Algorithm~\ref{algo:SER} formulates this logic as the Score-based Edge Reliability (SER) approach. Instead of dynamically fracturing the graph during the transient phase, SER maintains a static topology $\mathcal{G}$ and executes $M$ independent trials of the social learning dynamics. Each realization $m \in \{1, \dots, M\}$ is initialized with a new uniform random state $\bm{x} \sim \mathcal{U}(-1, 1)^N$, analogous to the network debating $M$ distinct, uncorrelated topics. Following the convergence of each iteration, agents evaluate the local structural tension along their incident edges.

\subsubsection*{Evaluation metrics: Action vs. Action and Opinion vs. Action}

Agents can evaluate post-discussion tension in two ways:

\begin{figure}[t]
    \centering
    \begin{subfigure}[t]{0.42\textwidth}
        \includegraphics[width=\textwidth]{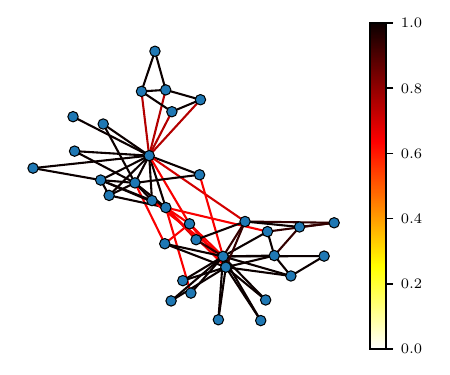}
        \caption{Symmetric agreement matrix based on Action vs. Action ($|s_i - s_j|$).}
    \end{subfigure}
    \hfil
    \begin{subfigure}[t]{0.35\textwidth}
        \includegraphics[width=\textwidth]{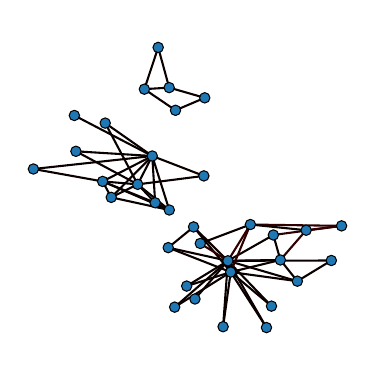}
        \caption{Obtained undirected graph with cut threshold $\tau = 0.9$.}
    \end{subfigure}
    \caption{\textbf{Structural tension and partitions via Action vs. Action evaluation.} The resulting graph of SER algorithm on Zachary's Karate club network \cite{zacharyInformationFlowModel1977a}. Boundary nodes are often fully absorbed into the core communities due to their saturated public signals.}
    \label{fig:SER_action}
\end{figure}

\textbf{1. Symmetric Alignment (Action vs. Action):} Comparing the observable signals of adjacent agents yields the tension $\Delta_{ij} = |s_i - s_j|$. If $\Delta_{ij} < \varepsilon$, the agents pulled each other into the same invariant hypercube. Over $M$ independent realizations, this yields a symmetric probability score $P_{ij} = P_{ji} \in [0,1]$. Figure~\ref{fig:SER_action} shows this filters out weak boundary edges. Edges inside a community synchronize ($P_{ij} \approx 1$), while inter-community edges fracture ($P_{ij} \approx 0.5$). Frontier nodes are heavily influenced by dense neighbors, so they often saturate their public action ($s \approx \pm 1$). This approach classifies frontier nodes into the adjacent community with the higher degree.

\begin{figure}[t]
    \centering
    \begin{subfigure}[t]{0.42\textwidth}
        \includegraphics[width=\textwidth]{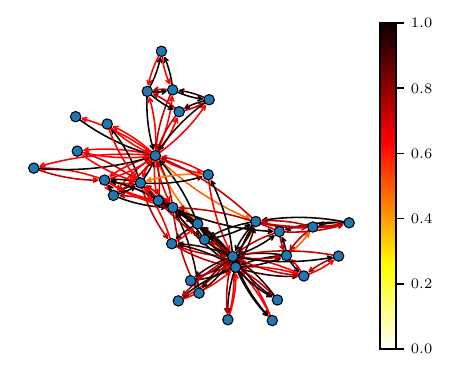}
        \caption{Asymmetric agreement matrix based on Opinion vs. Action ($|x_i - s_j|$).}
    \end{subfigure}
    \hfil
    \begin{subfigure}[t]{0.35\textwidth}
        \includegraphics[width=\textwidth]{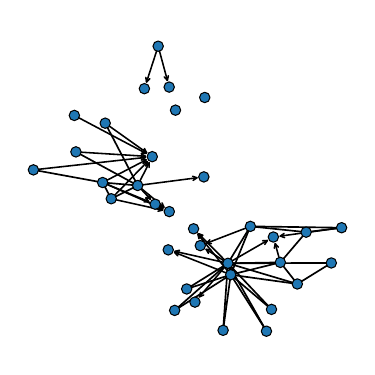}
        \caption{Obtained directed graph with cut threshold $\tau = 0.9$.}
    \end{subfigure}
    \caption{\textbf{Isolating dense cores and structural ambiguity via Opinion vs. Action evaluation.} The evaluation generates a directed trust graph on Zachary's Karate club network \cite{zacharyInformationFlowModel1977a}. Core nodes remain highly interconnected (Strongly Connected Components), while moderate frontier nodes dynamically reject the extremist cores, naturally identifying the exact community boundaries.}
    \label{fig:SER_opinion}
\end{figure}

\textbf{2. Asymmetric Alignment (Opinion vs. Action):} To achieve a higher-resolution map of the community boundaries, an agent evaluates its \emph{private internal opinion} against its neighbor's \emph{public action}:
\begin{equation*}
    \Delta_{i \to j} = |x_i - s_j|.
\end{equation*}
Unlike the previous metric, this evaluation is asymmetric ($\Delta_{i \to j} \neq \Delta_{j \to i}$), transforming the undirected interaction graph into a \emph{directed} reliability map, as illustrated in Fig.~\ref{fig:SER_opinion}.

This asymmetry captures the sociological dynamic between extremist members and moderate frontier nodes. For example, consider a highly connected core node $i$ ($x_i \approx 1, s_i = 1$) connected to a structurally ambiguous frontier node $j$. Under the pressure of the core, the frontier node projects a saturated public agreement ($s_j = 1$), but due to conflicting pull from other communities, its internal opinion remains moderate and uncommitted ($x_j \approx 0.3$).

For the core node, the tension is low: $\Delta_{i \to j} = |1 - 1| = 0$. Observing only the public action, the core node trusts the frontier node. For the frontier node, the tension is high: $\Delta_{j \to i} = |0.3 - 1| = 0.7$. The frontier node recognizes its internal belief clashes with the core's public stance.

Thresholding this directed reliability map ($P_{i \to j} < \tau$) causes moderates to sever incoming ties from extremists, while outward ties from the core to the boundary remain. Figure~\ref{fig:SER_opinion}(b) shows the core communities emerge as isolated strongly connected components. Boundary nodes move to peripheral ``sink'' positions, preventing them from inflating the core communities.

\subsection{Methodological Comparison}

\begin{table*}[t]
    \caption{\label{tab:algo_comparison} Methodological comparison of social learning community detection algorithms.}
    \begin{ruledtabular}
        \footnotesize{
            \begin{tabular}{p{0.14\linewidth} p{0.27\linewidth} p{0.27\linewidth} p{0.27\linewidth}}
                \textbf{Name}                                    & \textbf{Recursive Neighbor Pruning (RNP)}                                                                     & \textbf{RNP with Decaying Confidence (RNP-DC)}                                                                                             & \textbf{Score-based Edge Reliability (SER)}                                                                                    \\
                \hline
                \textbf{Focus}                                   & Asymptotic agreement.                                                                                         & Transient speed of persuasion.                                                                                                             & Resilience across multiple topics.                                                                                             \\
                \textbf{Graph Topology}                          & Piecewise-static (pruned only upon reaching equilibrium).                                                     & State-dependent (pruned on the fly during the transient phase).                                                                            & Strictly static throughout all realizations.                                                                                   \\
                \textbf{Partition Logic}                         & Node-centric hard cuts based on disjoint invariant hypercubes.                                                & Dynamic isolation based on shrinking tension threshold ($r(t)$).                                                                           & Edge-centric probabilistic filtering ($P_{ij} \geq \tau$).                                                                     \\
                \hline
                \textbf{Modeling} \newline \textbf{Advantages}   & Nonlinear extension of standard spectral bisection. Random initialization explores diverse structural basins. & Early decoupling protects dense cores from external noise. Achieves \emph{exact synchronization}, enabling multi-plateau signal functions. & Eliminates error from hard cuts. Naturally exposes frontier nodes and overlapping mesoscale structures.                        \\
                \textbf{Modeling} \newline \textbf{Limitations}  & Highly vulnerable to errors from premature hard cuts. Arbitrarily divides boundary nodes.                     & Introduces non-topological decay parameters ($R, \rho$) that require tuning relative to the spectral gap of the graph.                     & Requires a secondary clustering step (e.g., extracting connected components from the filtered graph) to yield discrete groups. \\
                \hline
                \textbf{Computational} \newline \textbf{Profile} & Conceptually simple to implement. Requires multiple iterations to confirm indivisibility.                     & Dynamic edge pruning progressively accelerates numerical integration, but per-step edge evaluations limit scalability.                     & High aggregate computational cost, but trivially parallelizable.                                                               \\
            \end{tabular}
        }
    \end{ruledtabular}
\end{table*}

All three algorithms use nonlinear symmetry breaking, but their partition logics yield different trade-offs. Table \ref{tab:algo_comparison} summarizes their modeling advantages, limitations, and computational profiles.

The selection of the appropriate algorithm depends on the constraints of the target network. For applications requiring strict analytical reliability and the nuanced identification of ambiguous boundary nodes, SER provides the most robust mapping of the topological landscape. Conversely, for applications prioritizing computational speed, or in dynamic environments where rapid topological decoupling is advantageous, RNP-DC is highly effective. The subsequent section validates these theoretical properties numerically and benchmarks their clustering performance against standard modularity metrics.

\section{Numerical Section}\label{sec:numerical_validation}
While centralized heuristics like the Louvain method achieve near-linear time complexity, these simulations demonstrate that localized, privacy-preserving interactions can recover macroscopic community structures with equivalent accuracy. The objective is to achieve decentralized coordination rather than to surpass the global computational speed of centralized algorithms.

\subsection{Benchmark Setup and the Choice of ABCD Graphs}
\label{sec:benchmark_setup}

To systematically evaluate the algorithms, we require synthetic networks with a known ground-truth community structure. Historically, the LFR (Lancichinetti-Fortunato-Radicchi) benchmark \cite{lancichinettiBenchmarkGraphsTesting2008} has been the standard for this task. The LFR model generates graphs based on power-law distributions for both node degrees and community sizes, controlled by a mixing parameter $\mu$ that dictates the proportion of a node's edges that connect outside its community. This global parameter $\mu$ is similar to the parameters $\mu_i$ defined in Proposition~\ref{prop:iss_cell_dev}.

However, from the perspective of our discrete-time dynamics \eqref{eq:dynamic_vector}, a community is only robust if it acts as a cohesive local consensus block. According to Theorem \ref{thm:sharp_threshold}, a graph or subgraph can fracture if the interaction gain satisfies $K > 1/\lambda_{N-1}$, where $\lambda_{N-1}$ is the second-largest eigenvalue of its random-walk matrix $\bm{P}$.

For a community detection algorithm based on nonlinear symmetry breaking to succeed, a strict spectral hierarchy must exist: the global critical gain required to split the entire network ($K^*_{\mathrm{global}} = 1/\lambda_{N-1}(G)$) must be strictly smaller than the critical gain required to fracture the internal ground-truth communities ($K^*_{\mathrm{sub}} = 1/\lambda_{N-1}(G_c)$). If $K^*_{\mathrm{global}} > K^*_{\mathrm{sub}}$, tuning $K$ to break the global consensus may simultaneously—or prematurely—shatter the ground-truth communities, leading to unavoidable over-segmentation.

\begin{figure}[t]
    \centering
    \begin{subfigure}[t]{0.48\textwidth}
        \includegraphics[width=\textwidth]{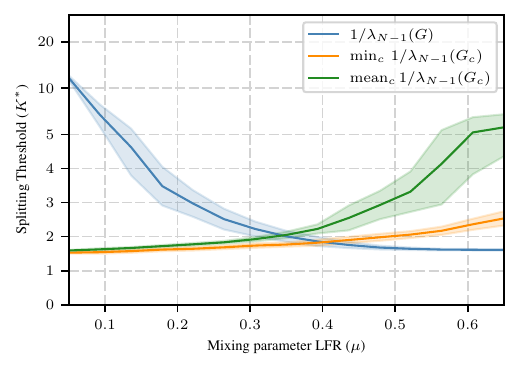}
        \caption{LFR Benchmark: $\beta = 1.5, \gamma = 3$}
        \label{fig:lfr_threshold}
    \end{subfigure}
    \hfill
    \begin{subfigure}[t]{0.48\textwidth}
        \includegraphics[width=\textwidth]{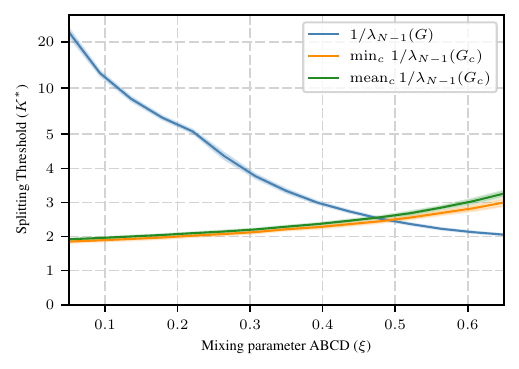}
        \caption{ABCD Benchmark: $\beta = 1.5, \gamma = 3$}
        \label{fig:abcd_threshold}
    \end{subfigure}
    \caption{\textbf{Spectral difference of ground truth communities between the LFR vs. ABCD benchmarks.} The blue line represents the global critical splitting gain $K^*_{\mathrm{global}} = 1/\lambda_{N-1}(G)$, while the orange and green lines represent the minimum and mean critical gains required to fracture the isolated ground-truth communities. (a) In LFR graphs, the curves cross prematurely at $\mu \approx 0.35$, meaning ground-truth communities lose internal structural integrity before the global network does. (b) In ABCD graphs, the strict spectral hierarchy ($K^*_{\mathrm{global}} < K^*_{\mathrm{sub}}$) is preserved almost up to the theoretical detectability limit of $\xi = 0.5$.}
    \label{fig:lfr_vs_abcd}
\end{figure}

As illustrated in Figure \ref{fig:lfr_threshold}, evaluating the spectrum of LFR graphs reveals this necessary hierarchy fails early. The global threshold curve $K^*_{\mathrm{global}}$ (blue) crosses the mean sub-community threshold (green) near $\mu \approx 0.35$. Past this point, the induced subgraphs of LFR ground-truth communities become poorly connected. The nonlinear dynamics artificially segment these weak communities before identifying the true graph partition.

To resolve this, we utilize the Artificial Benchmark for Community Detection (ABCD) \cite{kaminskiArtificialBenchmarkCommunity2021}. The ABCD model preserves the realistic power-law properties of LFR but replaces $\mu$ with a mixing parameter $\xi \in (0,1)$ and enforces well-connected induced subgraphs. As shown in Figure \ref{fig:lfr_vs_abcd}(b), ABCD preserves the spectral hierarchy $K^*_{\mathrm{global}} < K^*_{\mathrm{sub}}$, making it a mathematically sound benchmark for dynamic and spectral clustering.

Analyzing the crossing point in the ABCD model empirically validates our theoretical bounds on community robustness. The mixing parameter $\xi$ represents the expected ratio of external to total degree ($d_i^E / d_i$). From Proposition \ref{prop:cell_deviation_bound}, permanent confinement in an invariant hypercube requires the internal restoring force to dominate the external perturbation. If $\xi > 0.5$, nodes possess more external connections than internal ones ($d_i^E > d_i^I$). Under these conditions, the aggregate pull of the external network overpowers the internal consensus for any reasonably smooth signal function.

This is reflected in Figure \ref{fig:abcd_threshold}, where the threshold curves intersect near $\xi = 0.5$. This point represents a hard physical and topological bottleneck. Beyond this detectability limit, the local social learning dynamics is theoretically guaranteed to fail at preserving strict community boundaries. Consequently, our performance evaluations focus strictly on the feasible structural regime where $\xi \leq 0.5$.

In the following experiments, we generate ABCD graphs with $N=1000$ nodes, a degree power-law exponent $\gamma=2.5$, a community size power-law exponent $\beta=1.5$, a minimum degree of $10$, and community sizes constrained between $20$ and the maximum possible.

\subsection{Comparative Performance}
\label{sec:comparative_performance}

\begin{table}[t]
    \centering
    \caption{Simulation Parameters and Algorithm Hyperparameters}
    \label{tab:sim_params}
    \begin{tabular}{@{}lp{0.65\columnwidth}@{}}
        \toprule
        \textbf{Category}        & \textbf{Parameter Values}                                                                                                                             \\
        \midrule
        \textbf{ABCD Benchmark}  & Nodes $N=1000$, degree exp. $\gamma=2.5$, comm. size exp. $\beta=1.5$, min. degree $= 10$, min. comm. size $= 20$, independent trials per $\xi = 20$. \\ \addlinespace
        \textbf{Global Dynamics} & Signal function $s(x) = \min(1, \max(-1, Kx)$, Interaction gain $K=4$, step size $h=0.5$, time steps $T=300$.                                         \\ \addlinespace
        \textbf{RNP}             & Number of retrials $N_{\text{retry}} = 50$.                                                                                                           \\ \addlinespace
        \textbf{RNP-DC}          & Initial Radius $R=20$, Decay rate $\rho=0.99$, Number of retrials $N_{\text{retry}} = 50$.                                                            \\ \addlinespace
        \textbf{SER}             & Number of Iterations $M= 200$, Cut threshold $\tau=1$, Tension threshold $\epsilon=0.5$.                                                              \\
        \bottomrule
    \end{tabular}
\end{table}

To evaluate the limits of the proposed decentralized frameworks, we analyze their performance across a spectrum of topological difficulties. Using the ABCD benchmark as justified in Section~ \ref{sec:benchmark_setup}, we vary the mixing parameter $\xi$ from $0.01$ (highly cohesive, isolated communities) to $0.60$ (networks heavily dominated by inter-community noise). To ensure strict reproducibility, the specific structural properties of the generated graphs (e.g., degree distribution $\gamma=2.5$, community size distribution $\beta=1.5$) and all algorithm hyperparameters are summarized in Table~\ref{tab:sim_params}. We compare our three dynamic approaches—RNP, RNP-DC, and SER—against four established baselines: Louvain \cite{blondelFastUnfoldingCommunities2008} (modularity maximization), Leiden \cite{traagLouvainLeidenGuaranteeing2019} (an optimized refinement of Louvain), standard Label Propagation \cite{raghavanLinearTimeAlgorithm2007} (a decentralized heuristic), and exact Spectral Clustering \cite{pothenPartitioningSparseMatrices1990}.

For the social learning algorithms, agent interactions follow the saturated piecewise-linear signal function $s(x) = \min(1, \max(-1, 4x))$. This yields a local interaction gain of $K=4$ at the unstable origin, driving discrete-time updates with a step size of $h=0.5$. We assess the resulting partitions using Normalized Mutual Information (NMI), Relative Segmentation Error ($(M_{\mathrm{pred}} - M_{\mathrm{truth}}) / M_{\mathrm{truth}}$), Modularity ($Q$), and Average Conductance.

\begin{figure*}[t]
    \centering
    \begin{subfigure}[t]{0.48\textwidth}
        \includegraphics{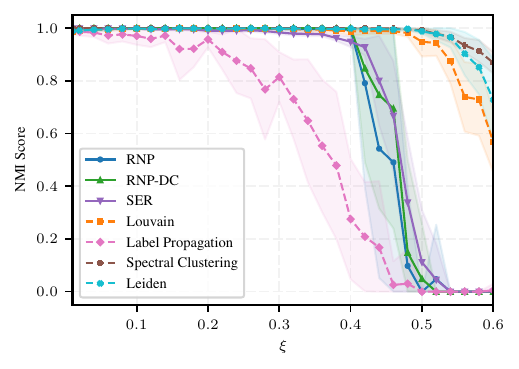}
        \caption{Normalized Mutual Information (NMI)}
        \label{fig:bench_nmi}
    \end{subfigure}
    \begin{subfigure}[t]{0.48\textwidth}
        \includegraphics{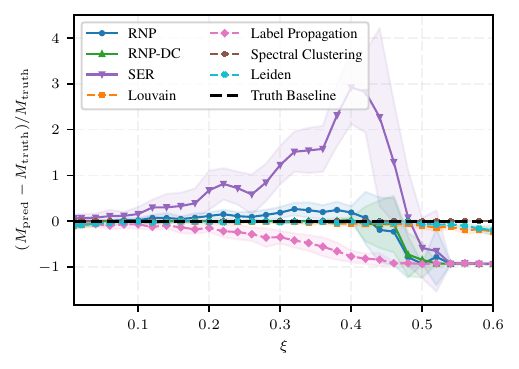}
        \caption{Relative Segmentation Error}
        \label{fig:bench_seg_error}
    \end{subfigure}
    \begin{subfigure}[t]{0.48\textwidth}
        \includegraphics{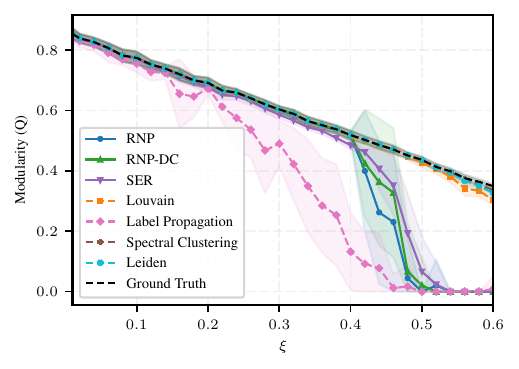}
        \caption{Modularity ($Q$)}
        \label{fig:bench_mod}
    \end{subfigure}
    \begin{subfigure}[t]{0.48\textwidth}
        \includegraphics{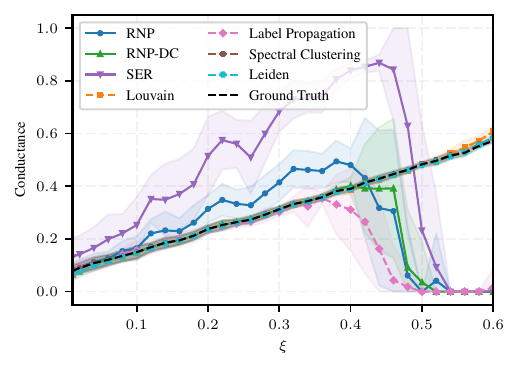}
        \caption{Average Conductance}
        \label{fig:bench_cond}
    \end{subfigure}
    \caption{\textbf{Benchmarking decentralized algorithms against literature baselines.} Performance evaluation on ABCD graphs ($N=1000$) as a function of the mixing parameter $\xi$. (a) Normalized Mutual Information (NMI). (b) Relative Segmentation Error, illustrating the capacity of algorithms to determine the correct number of communities ($M_{\mathrm{Truth}}$). (c) Modularity ($Q$) of the recovered partitions. (d) Average Conductance. The proposed methods (RNP, RNP-DC, SER) perform exactly on par with globally optimized centralized heuristics (Louvain, Leiden, Spectral Clustering) up to the theoretical detectability limit of the ABCD model ($\xi \approx 0.45$) identified in the Figure~\ref{fig:lfr_vs_abcd}. Furthermore, they outperform classical decentralized approaches like Label Propagation.}
    \label{fig:abcd_benchmark}
\end{figure*}

We plot a \emph{Ground Truth} baseline for relative segmentation error, modularity, and average conductance (Figures~\ref{fig:bench_seg_error}, \ref{fig:bench_mod}, and \ref{fig:bench_cond}). We compute these curves directly on the known communities from the ABCD graph generation. This reference tracks how the structural properties of the initial partitions degrade as the inter-community noise parameter $\xi$ increases. For the relative segmentation error (Figure~\ref{fig:bench_seg_error}), this ground truth simply provides a zero-baseline at the exact number of planted communities.

\paragraph{Symmetry Breaking and the Detectability Limit:}
As illustrated in Figure \ref{fig:abcd_benchmark}, the numerical results reveal a clear phase transition that illustrates the theoretical spectral bounds from Section \ref{sec:structural_splitting}. In the structurally discernible regime ($\xi \le 0.40$), local nonlinear interactions destabilize the neutral origin, forcing the network to fracture along its true topological bottlenecks. The proposed algorithms achieve an NMI close to $1.0$. Our strictly local privacy-preserving agents reconstruct ground-truth partitions with an accuracy matching centralized heuristics like Louvain and Leiden. Standard discrete Label Propagation collapses prematurely near $\xi \approx 0.15$.

As the mixing parameter approaches the physical detectability limit ($\xi \approx 0.45$), the dense internal cores begin to dissolve into the external noise. Recalling Proposition \ref{prop:cell_deviation_bound}, the restoring force of the local consensus is ultimately overwhelmed by the conflicting pull of external connections, causing the invariant hypercubes to break down. Without strict state confinement, the decentralized dynamic algorithms lose the structural resolution required to separate true boundary cuts from stochastic topological noise.

\paragraph{Clustering Without a priori information on the topology:}
An advantage of this approach is its ability to endogenously discover the network's macroscopic scale. Standard spectral clustering requires the true number of communities $M_{\mathrm{truth}}$ as an \textit{a priori} input to define the dimension of the projection space. Our agents operate with no global structural priors. Driven by local symmetry breaking and state confinement, RNP and RNP-DC dynamically settle into the correct number of polarized basins, maintaining a relative segmentation error close to zero up to the critical threshold.

For the SER algorithm, we observe a transient phase of over-segmentation (positive relative error) between $\xi=0.30$ and $\xi=0.45$. Because SER statistically evaluates the structural resilience of edges across multiple independent iterations, the increasing topological ambiguity in this regime causes noisy boundary ties to frequently fail the strict trust threshold. Sociologically, this means that rather than forcing arbitrary, erroneous macroscopic merges, SER safely fractures highly contested frontier regions into smaller, isolated subgroups.

As $\xi$ exceeds the critical detectability limit, the relative segmentation error for all three algorithms converges to $-1$. This boundary manifests Theorem \ref{thm:sharp_threshold}. At this point, severe topological mixing ($\lambda_{N-1}$) completely suppresses the local interaction gain ($K=4$). Structural splitting becomes mathematically impossible, and the entire network is pulled into a single globally attractive consensus state ($M_{\mathrm{pred}} = 1$), yielding a relative error of $(1 - M_{\mathrm{truth}})/M_{\mathrm{truth}} \approx -1$.

\paragraph{Structural Optimality:}
Evaluations of Modularity ($Q$) and Conductance confirm that partitions generated by the dynamics \eqref{eq:dynamic_vector} align with optimal structural cuts. Within the feasible structural regime ($\xi \le 0.40$), the purely localized decisions made by RNP, RNP-DC, and SER track the modularity maximized by Louvain, ensuring dense internal cohesion and sparse external boundaries. The artificial drop in Conductance to zero past $\xi = 0.50$ confirms the global consensus takeover: unable to sustain localized disagreements, the network evaluates itself as a single undivided component devoid of external boundaries.

\subsection{Parameter Sensitivity and Local Tuning}
\label{sec:parameter_sensitivity}

\begin{figure}[t]
    \centering
    \begin{subfigure}[t]{0.48\textwidth}
        \includegraphics[width=\textwidth]{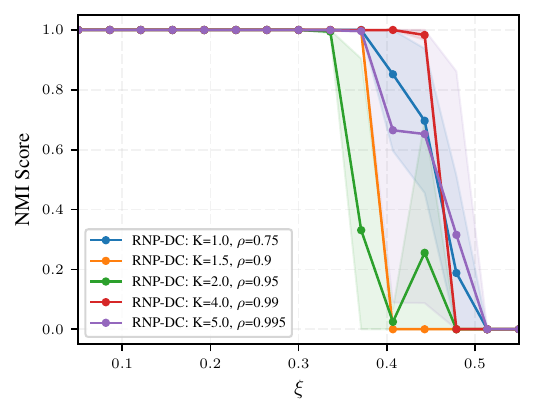}
        \caption{RNP-DC: Normalized Mutual Information (NMI)}
    \end{subfigure}
    \begin{subfigure}[t]{0.48\textwidth}
        \includegraphics[width=\textwidth]{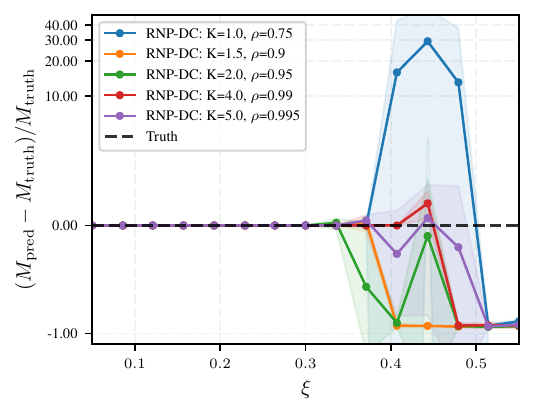}
        \caption{RNP-DC: Relative Segmentation Error}
    \end{subfigure}
    \caption{\textbf{Parameter variation in the RNP-DC algorithm: balancing interaction gain ($K$) and confidence decay ($\rho$).} Evaluating the sensitivity of the algorithm to its main parameters on ABCD graphs. (a) Normalized Mutual Information (NMI). (b) Relative Segmentation Error. An aggressive decay paired with weak interaction ($K=1.0, \rho=0.75$) (this is a particular case of \cite{morarescuOpinionDynamicsDecaying2011}) arbitrarily shatters the network as noise increases, leading to severe over-segmentation. Conversely, a robust interaction gain paired with a slower, deliberate decay ($K=4.0, \rho=0.99$) correctly identifies the exact number of communities up to $\xi=0.45$, before safely yielding to a unified global consensus (-1 relative error) when the physical detectability limit is exceeded.}
    \label{fig:rnpdc_parameter_sensitivity}
\end{figure}

In centralized spectral graph theory, algorithmic performance is frequently optimized by leveraging a global view of the network. If a centralized operator possessed full knowledge of the graph's topology, specifically the algebraic connectivity $1 - \lambda_{N-1}$ and the expected mixing parameter $\xi$, they could analytically calibrate the dynamical parameters of our models. For instance, in the RNP-DC algorithm, the topological discovery relies entirely on a delicate race condition between the physical state convergence (driven by the gain $K$) and the mathematical edge pruning (driven by the decay rate $\rho$ as in \cite{morarescuOpinionDynamicsDecaying2011} where the $K$ is fixed to $K=1$).

Figure \ref{fig:rnpdc_parameter_sensitivity} exposes the highly sensitive nature of this race condition. When the interaction gain is low and threshold decay is overly aggressive (e.g., $K=1.0, \rho=0.75$), the algorithm severs bridging edges long before local subgraphs form a cohesive internal consensus. As structural noise increases ($\xi > 0.4$), this miscalibration causes the algorithm to shatter the graph into meaningless fragments, indicated by the massive spike in relative segmentation error (reaching up to $30 \times M_{\mathrm{truth}}$).

Conversely, moderate parameter pairings (e.g., $K=1.5, \rho=0.9$ or $K=2.0, \rho=0.95$) fail in the opposite direction. As the network becomes more interconnected, the moderate gain is insufficient to overcome topological averaging, and the decay is not slow enough to allow dense cores to isolate themselves. These configurations lose their structural resolution prematurely, plunging to an NMI of $0$ and a relative error of $-1$ (indicating a failure to split) well before the theoretical detectability limit.

The empirical optimum for this network size and density is achieved by combining a strong local interaction with a slow, deliberate decay ($K=4.0, \rho=0.98$). This configuration holds perfect NMI and zero segmentation error up to $\xi=0.45$. When the theoretical threshold is breached, this robust parameter set fails safely: rather than returning a noisy and incorrect partition, local restoring forces are smoothly overcome by the external topology, resulting in a clean transition to a single global consensus.

This sensitivity analysis highlights the challenge of decentralized community detection. In our targeted applications privacy-preserving human social networks computing $\lambda_{N-1}$ or estimating $\xi$ is strictly impossible. Local agents operate blindly, possessing only a self-centered view of their immediate neighborhood.

Relying on hyper-tuned parameters tailored to a specific $\xi$ is therefore not a viable strategy for true decentralized deployment. A practical distributed algorithm must trade the absolute peak performance achievable on a known graph for broad resilience across an unknown spectrum of topologies. This constraint justifies the use of a saturated, robust default gain ($K=4$) to guarantee symmetry breaking. It also validates the necessity of the SER approach (Section \ref{sec:algorithms}). Rather than trusting a single transient trajectory governed by hypothesized temporal parameters ($h$ and $\rho$), SER utilizes statistical repetition to bypass the lack of global information, extracting the true topological boundaries directly from the agents' interactions.

\subsection{Real-World Validation: The Fission of the Ngogo Chimpanzee Network}
\label{sec:real_world}

Before extending our evaluation to empirical datasets, we note that the classical Zachary's Karate Club network \cite{zacharyInformationFlowModel1977a} has already been utilized in Section \ref{sec:algorithms} (see Figs. \ref{fig:SER_action} and \ref{fig:SER_opinion}) to illustrate the foundational mechanics of the SER algorithm. Having established the ability of the algorithm to identify the true structural partition and properly isolate frontier nodes in such foundational benchmarks, we now deploy the framework on two larger, highly nuanced real-world topologies, beginning with a biological animal social network.

We first analyze the social proximity network of the Ngogo chimpanzee community \cite{sandelLethalConflictGroup2026,sandelDataLethalConflict2026}. Recently documented as a rare instance of permanent group fission, this dataset provides a unique look at a social network undergoing a macroscopic structural division.

From a multi-agent systems perspective, this biological phenomenon fits the decentralized clustering objective formulated in Section \ref{sec:problem_formulation}. In this natural system, individuals do not possess a global view of the interaction topology, nor do they communicate explicit group identifiers. Group affiliations are organically inferred and updated through localized, pairwise interactions (e.g., grooming and spatial proximity). The observation by Sandel et al. that the network's division was driven primarily by shifting relational dynamics---rather than exogenous cultural markers---provides a direct natural parallel to the localized symmetry-breaking mechanism captured by our opinion dynamics model \eqref{eq:dynamic_vector}.

\begin{figure}[t]
    \centering
    \begin{subfigure}[t]{0.40\textwidth}
        \includegraphics{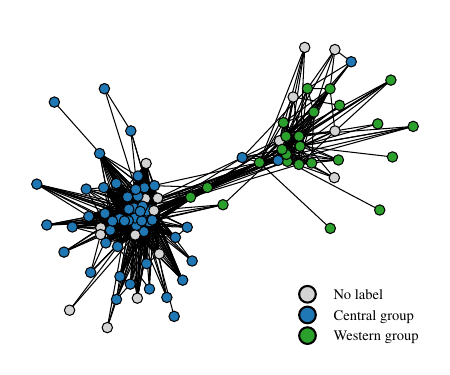}
        \caption{Ngogo Chimpanzee Network (2015) with 2018 True Factions}
        \label{fig:chimp_gt}
    \end{subfigure}
    \hfil
    \begin{subfigure}[t]{0.50\textwidth}
        \includegraphics{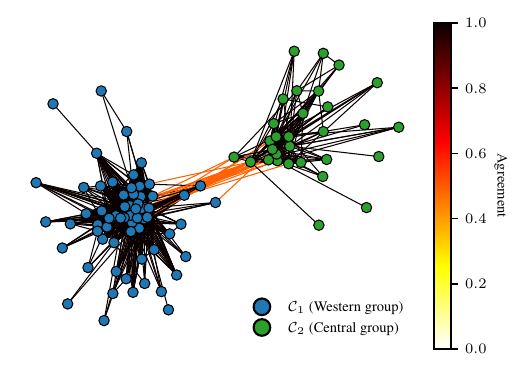}
        \caption{SER Decentralized Edge Evaluation}
        \label{fig:chimp_ser}
    \end{subfigure}
    \caption{\textbf{Comparison of biological ground truth and SER output on the Ngogo Chimpanzee Network.} (a) The observed proximity network in 2015, the transition year of the structural bifurcation, annotated with the eventual ground-truth affiliations (Central vs. Western factions of 2018 splitting \cite{sandelLethalConflictGroup2026}). (b) The emergent state evaluated by the Score-based Edge Reliability (SER) algorithm ($K=4, h=0.5, \tau=1$). The algorithm successfully recovers the two primary communities strictly via local interactions. The colormap on the edges represents the probabilistic agreement matrix ($P_{ij}$), indicating lower structural reliability (orange) on the bridging edges that ultimately fractured.}
    \label{fig:chimp_validation}
\end{figure}

For our evaluation, we examine the proximity network from 2015, identified by the original study as the transition year when the group began to exhibit structural polarization. Figure \ref{fig:chimp_gt} illustrates this unweighted interaction topology, with nodes colored according to their eventual ground-truth affiliations (the Central and Western factions that ultimately break apart in 2018 \cite{sandelLethalConflictGroup2026}).

Applying the SER algorithm to this static snapshot, agents were initialized with random continuous states across $100$ independent realizations to simulate the inherent variability of local interactions. The system was parameterized with a moderate interaction gain ($K=4$), a step size of $h=0.5$, and an edge-cutting threshold of $\tau=1$.

As depicted in Figure \ref{fig:chimp_ser}, the SER protocol recovers the macroscopic partition of the network purely through local consensus dynamics. Driven by the underlying topological bottlenecks, the nonlinear interaction destabilizes the neutral origin, guiding structurally cohesive subgroups into opposing invariant hypercubes and partitioning the network without requiring global eigenvalue computations.

The probabilistic edge filtering intrinsic to the SER approach yields much more information than standard recursive bisections. The colormap in Figure \ref{fig:chimp_ser} projects the continuous agreement matrix ($P_{ij}$) onto the network edges. Links embedded within the dense Central and Western cores exhibit near-perfect agreement ($P_{ij} \approx 1$, dark red), demonstrating robust structural resilience against the stochastic initialization of internal states. In contrast, the bridging edges spanning the two emerging factions register significantly lower agreement scores ($P_{ij} < 0.6$, orange). Because these boundary nodes are subjected to competing influences from opposing dense cores, their transient trajectories exhibit high variability. The SER algorithm naturally highlights these specific topological links as the weakest structural points.

This empirical application demonstrates a practical advantage of the dynamical approach. While standard centralized spectral clustering typically requires the target number of communities $k$ as an \textit{a priori} input to force a rigid discrete partition, the localized nonlinear dynamics yield a continuous measure of structural strain. The resulting low agreement scores highlight the inter-community edges, offering an endogenous, decentralized indicator of the exact topological fracture lines along which the biological network eventually divided.

\subsection{Evaluating Structural Ambiguity: The College Football Network}
\label{sec:college_football}

To further assess the robustness of the proposed dynamical framework—particularly with severe structural ambiguity—we analyze the standard United States College Football network \cite{girvanCommunityStructureSocial2002b}. This unweighted graph encodes the game schedule of the 2000 Division I season, where vertices represent university teams and edges denote regular-season matchups.

The nominal ground-truth communities in this dataset correspond to administrative athletic conferences (e.g., Big Ten, SEC, Pac-10). Because teams typically play an average of seven intra-conference games compared to only four inter-conference games, these conferences naturally function as dense topological cores, a cohesive structure further reinforced by geographically localized scheduling. The localized nonlinear dynamics drive the vast majority of the network into distinct, stable polarized states, reconstructing these primary conference boundaries via decentralized interactions (Fig. \ref{fig:football_evaluation}).

\begin{figure}[t]
    \centering
    \begin{subfigure}[b]{0.5\textwidth}
        \centering
        \includegraphics{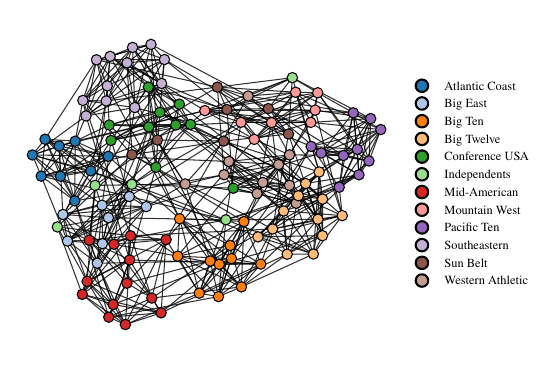}
        \caption{Nominal Ground-Truth Conferences}
        \label{fig:football_gt}
    \end{subfigure}
    \hfil
    \begin{subfigure}[b]{0.35\textwidth}
        \centering
        \includegraphics{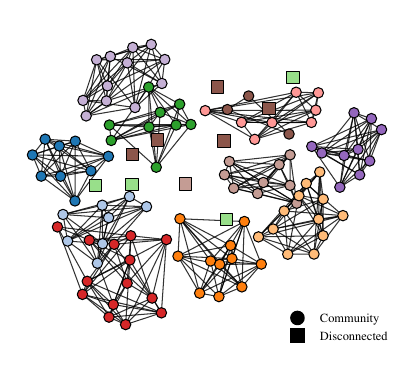}
        \caption{SER Algorithm Filtered State}
        \label{fig:football_ser}
    \end{subfigure}
    \caption{\textbf{Comparison of ground truth and SER output on the College Football Network.} (a) The original network topology, with nodes colored according to their administrative athletic conferences. (b) The emergent partition generated by the SER algorithm ($K=6, h=0.1, \tau=1$) over 100 independent realizations. Surviving edges denote highly reliable structural connections. Nodes marked with an square represent structurally ambiguous teams (primarily from the Sun Belt conference and Independents) that the algorithm decoupled from the cores due to extreme variability between the iterations.}
    \label{fig:football_evaluation}
\end{figure}

However, notable deviations from these administrative labels emerge for the ``Independent'' teams and the ``Sun Belt'' conference. Analytical inspection confirms these discrepancies are not algorithmic failures, but precise reflections of the true topological distribution of these nodes, which lack dense internal connectivity.

Structurally, the ``Independent'' teams operate as isolated boundary agents. They schedule games across a widely distributed array of opponents. Their internal degree with respect to any group is negligible ($d_i^I \approx 0$), while their external degree is disproportionately high. Under the continuous-time update rule \eqref{eq:dynamic_vector}, these nodes lack the internal restoring force to satisfy the strict state confinement conditions from Proposition \ref{prop:cell_deviation_bound}. The local dynamics pull these independent agents into the invariant hypercubes of the denser conferences with which they share the most scheduling ties.

The fragmentation of the Sun Belt conference can be attributed to an analogous topological sparsity. During the sampled season, Sun Belt teams played nearly as many inter-conference games as intra-conference matches \cite{girvanCommunityStructureSocial2002b} and failed to act as a cohesive unit.

When evaluated through the SER framework, the dynamics across multiple independent iterations reflect this ambiguity, resulting in consistently low agreement scores on intra-Sun Belt edges. As illustrated in Figure~\ref{fig:football_ser}, the algorithm safely fractures this sparsely connected group. It decouples the most ambiguous boundary nodes (marked by square) and merges the remaining constituents into the denser conferences with which they most frequently interact.

\section{Conclusion}\label{sec:conclusion}
This paper demonstrates that the macroscopic problem of community detection can be solved entirely through the localized continuous-time interactions of a nonlinear dynamical system. By mapping ``social learning'' to a multi-agent consensus protocol, we established that a saturated nonlinear communication function forces a structurally cohesive network to break symmetry.

Theoretical analysis confirms this localized symmetry breaking is strictly governed by the spectral gap of the network ($\lambda_{N-1}$). When the interaction gain surpasses this spectral threshold, densely connected subgraphs fall into forward-invariant hypercubes, effectively trapping core communities in stable, polarized states.

To apply this opinion dynamics as privacy-preserving distributed algorithms, we introduced three algorithms: Recursive Neighbor Pruning (RNP), a decaying confidence model (RNP-DC), and a Score-based Edge Reliability (SER). SER solves the vulnerability of greedy bisections. It isolates structurally ambiguous frontier nodes and identifies exact community boundaries without requiring global eigenvectors or a priori knowledge of the target number of communities. Empirical validations on the ABCD benchmark and complex real-world topologies (Zachary's Karate Club, the Ngogo Chimpanzee network, and the College Football network) confirm these decentralized protocols achieve clustering accuracy on par with globally optimized centralized heuristics. Our analysis of the LFR benchmark demonstrates the premature spectral collapse of traditional synthetic networks and supports the ABCD model for clustering evaluations.

\bibliography{Community_Detection.bib}

\appendix

\subsection{Proof of Proposition~\ref{prop:iss_cell_dev}}
\begin{proof}
    Consider the sequence $V(t) = \max_{i\in \Vcal'} |x_i(t) - c| = \delta(t)$. Let $m \in \Vcal'$ be the index of the agent realizing this maximum at time $t+1$, such that $V(t+1) = |e_m(t+1)|$.

    Using the discrete-time dynamics \eqref{eq:dynamic_single_agent} and substituting $x_j(t) = c + e_j(t)$, the update for $e_m$ is:
    \begin{align*}
        e_m(t+1) & = e_m(t) + h \left[ \frac{1}{d_m} \sum_{j \in \mathcal{N}_m \cap \Vcal'} a_{mj} \big(s(c + e_j(t)) - c\big) +  \frac{1}{d_m} \sum_{j \in \mathcal{N}_m \setminus \Vcal'} a_{mj}  \big(s(x_j(t)) - c\big) - e_m(t) \right] \\
                 & = (1-h)e_m(t) + h \left[ \frac{1}{d_m} \sum_{j \in \mathcal{N}_m \cap \Vcal'} a_{mj} \big(s(c + e_j(t)) - s(c)\big) + \frac{1}{d_m} \sum_{j \in \mathcal{N}_m \setminus \Vcal'} a_{mj} \big(s(x_j(t)) - c\big) \right],
    \end{align*}
    where we utilized the fixed point property $s(c) = c$. Taking the absolute value, applying the triangle inequality,, noting that $h \in (0,1]$ implies $(1-h) \ge 0$, and $|s(c+e_j(t)) - s(c)| \le L_c|e_j(t)|$ due to the local Lipschitz property on $\mathcal{I}_c$, we obtain:
    \begin{align*}
        V(t+1) \le (1-h)|e_m(t)| + h \left[ L_c \frac{d_m^I}{d_m} V(t) + \frac{d^E_m}{d_m} u_m(t) \right] & \le (1-h)V(t) + h \left[ L_c \mu_m V(t) + (1-\mu_m) \|u(t)\|_{\infty} \right] \\
                                                                                                          & = (1 - h\eta_m) V(t) + h (1-\mu_m) \|u(t)\|_{\infty}.
    \end{align*}
    By the definition of the effective perturbation gain, we have $(1-\mu_m) = \gamma_m \eta_m \le \gamma_{\max} \eta_m$. Substituting this upper bound yields:
    \begin{equation}\label{eq:convex_bound}
        V(t+1) \le (1 - h\eta_m) V(t) + h\eta_m \gamma_{\max} \|u(t)\|_{\infty}.
    \end{equation}
    Let $U(t) = \gamma_{\max} \|u(t)\|_{\infty}$. We prove the discrete-time ISS bound by induction. For $t=0$, the base case holds trivially. Assume $V(t) \le (1-h\eta)^t V(0) + \max_{0 \le \tau < t} U(\tau)$. Substituting this into \eqref{eq:convex_bound}:
    \begin{equation*}
        V(t+1) \le (1-h\eta_m) \Big[ (1-h\eta)^t V(0) + \max_{0 \le \tau < t} U(\tau) \Big] + h\eta_m U(t).
    \end{equation*}
    Because $\eta \le \eta_m$ and $V(0) \ge 0$, the transient term satisfies $(1-h\eta_m)(1-h\eta)^t V(0) \le (1-h\eta)^{t+1} V(0)$. The remaining terms represent a strict convex combination (since $h\eta_m \in (0, 1]$) of the previous maximum perturbation and the current perturbation $U(t)$. A convex combination of two values is strictly bounded by their maximum:
        \begin{equation*}
            (1-h\eta_m) \max_{0 \le \tau < t} U(\tau) + h\eta_m U(t) \le \max_{0 \le \tau \le t} U(\tau).
        \end{equation*}
        Summing these bounds completes the inductive step, verifying the sequence bound for all $t$.

        Finally, to derive the asymptotic limit without trapping initial perturbation peaks, note that the bound holds invariantly for any starting time $T \ge 0$. For any $t > T$:
        \begin{equation*}
            V(t) \le (1-h\eta)^{t-T} V(T) + \gamma_{\max} \max_{T \le \tau < t} \|u(\tau)\|_{\infty}.
        \end{equation*}
        Taking the limit superior of both sides as $t \to \infty$, the transient decay strictly vanishes, leaving:
        \begin{equation*}
            \limsup_{t \to \infty} V(t) \le \gamma_{\max} \sup_{\tau \ge T} \|u(\tau)\|_{\infty}.
        \end{equation*}
        Because this holds for any arbitrarily large initial time $T$, we take the limit as $T \to \infty$. By the definition of the limit superior, $\lim_{T \to \infty} \sup_{\tau \ge T} \|u(\tau)\|_{\infty} = \limsup_{t \to \infty} \|u(t)\|_{\infty}$, which evaluates to the final stated asymptotic bound.
\end{proof}

\subsection{Proof of Proposition~\ref{prop:cell_deviation_bound}}

\begin{proof}
    Suppose at time step $t$, the sub-state vector lies within $\mathcal{H}(c, \varepsilon)$. We must show that for any agent $i \in \Vcal'$, $x_i(t+1) \in [c-\varepsilon, c+\varepsilon]$.

    The discrete update equation can be rewritten as a strict convex combination:
    \begin{equation*}
        x_i(t+1) = (1-h)x_i(t) + h \bar{s}_i(t),
    \end{equation*}
    where $\bar{s}_i(t) = \frac{1}{d_i} \sum_{j} a_{ij} s(x_j(t))$ is the average neighborhood signal. Since $h \in (0,1]$, $x_i(t+1)$ is bounded by the extrema of $x_i(t)$ and $\bar{s}_i(t)$. Because $x_i(t) \le c+\varepsilon$ by assumption, a sufficient condition to ensure $x_i(t+1) \le c+\varepsilon$ is to guarantee that the signal average satisfies $\bar{s}_i(t) \le c+\varepsilon$.

    We split the neighborhood signal into internal and external components:
    \begin{align*}
        d_i \bar{s}_i(t) & = \sum_{j \in \mathcal{N}_i \cap \Vcal'} a_{ij} s(x_j(t)) + \sum_{j \in \mathcal{N}_i \setminus \Vcal'} a_{ij} s(x_j(t)) \leq d_i^I s(c+\varepsilon) + d_i^E s(1),
    \end{align*}
    where we used the fact that $s$ is non-decreasing, $x_j(t) \le c+\varepsilon$ for internal neighbors, and bounded by $1$ for external neighbors. To enforce $\bar{s}_i(t) \le c+\varepsilon$, we require:
    \begin{equation*}
        d_i^I s(c+\varepsilon) + d_i^E s(1) \le d_i (c+\varepsilon) = d_i^I(c+\varepsilon) + d_i^E(c+\varepsilon).
    \end{equation*}
    Rearranging the terms to isolate the external pull yields:
    \begin{equation*}
        d_i^E \big( s(1) - (c+\varepsilon) \big) \le d_i^I \big( (c+\varepsilon) - s(c+\varepsilon) \big).
    \end{equation*}
    By the definitions of $p(c,\varepsilon)$ and $r(c,\varepsilon)$, the outward pull $s(1) - (c+\varepsilon)$ is bounded above by $p(c,\varepsilon)$, and the restoring force $(c+\varepsilon) - s(c+\varepsilon)$ is bounded below by $r(c,\varepsilon)$. Thus, condition \eqref{eq:condition_invariant} strictly implies the inequality, guaranteeing the upper bound.

    The proof for the lower bound $x_i(t+1) \ge c-\varepsilon$ follows symmetrically by establishing $\bar{s}_i(t) \ge c-\varepsilon$ utilizing the minimal external signal $s(-1)$ and the corresponding lower bound of the restoring force. Therefore, $\mathcal{H}(c, \varepsilon)$ is strictly forward invariant.
\end{proof}

\end{document}